 \newcommand\subsetsim{\mathrel{%
  \ooalign{\raise0.2ex\hbox{$\subset$}\cr\hidewidth\raise-0.8ex\hbox{\scalebox{0.9}{$\sim$}}\hidewidth\cr}}}
\date{Date of Submission: 26/04/2021 }
\author{Ioannis Kordonis$^1$ \and Athanasios-Rafail Lagos $^1$\and George  P. Papavassilopoulos $^{1,2}$  }
\institute{$^1$  National Technical University of Athens, School of Electrical and Computer Engineering, 9 Iroon Polytechniou str., Athens, Postal Code 157 80, Greece \newline $^2$ University of
Southern California, Department of Electrical
Engineering--Systems, 3740 McClintock
Ave, Los Angeles, CA 90089, United States\\
I. Kordonis \email{jkordonis1920@yahoo.com}, A.-R. Lagos
              \email{ lagosth@mail.ntua.gr}, G.P. Papavassilopoulos
              \email{yorgos@netmode.ntua.gr}
}
 \title{Dynamic Games of Social Distancing during an Epidemic: Analysis of Asymmetric Solutions }
\titlerunning{Dynamic Games of Social Distancing during an Epidemic}
\authorrunning{ I. Kodonis, A.-R. Lagos and G.P. Papavassilopoulos}
\begin{document}

\maketitle
\begin{abstract}
\sloppy
Individual behaviors play an essential role in the dynamics of transmission of infectious diseases, including COVID--19. This paper studies a dynamic game model that describes the social distancing behaviors during an epidemic, assuming a continuum of players and individual infection dynamics. The evolution of the players' infection states follows a variant of the well-known SIR dynamics. We assume that the players are not sure about their infection state, and thus they choose their actions based on their individually perceived probabilities of being susceptible, infected or removed. The cost of each player depends both on her infection state and on the contact with others. We prove the existence of a Nash equilibrium and characterize Nash equilibria using nonlinear complementarity problems. We then exploit some monotonicity properties of the optimal policies to obtain a reduced-order characterization for Nash equilibrium and reduce its computation to the solution of a low-dimensional optimization problem. It turns out that, even in the symmetric case, where all the players have the same parameters, players may have very different behaviors. We finally present some numerical studies that illustrate this interesting phenomenon and investigate the effects of several parameters, including the players' vulnerability, the time horizon, and the maximum allowed actions,  on the optimal policies and the players' costs.

\begin{keywords}
~COVID--19 pandemic, Games of social distancing, Epidemics modeling and control, Nash games, Nonlinear complementarity problems
\end{keywords}

\end{abstract}

\section{Introduction}
\sloppy
COVID--19 pandemic is one of the most important events of this era. Until early April 2021, it has caused more than 2.8 million deaths, an unprecedented economic depression, and affected most aspects of people's lives in the larger part of the world. During the first phases of the pandemic, Non--Pharmaceutical Interventions (primarily social distancing) has been one of the most efficient tools to control its spread \cite{european2020guidelines}. Due to the slow roll-out
of the vaccines, their uneven distribution, the emergence of SARS-CoV-2 variants, age limitations, and people's resistance to
vaccination, social distancing is likely to remain significant in large part of the globe for the near future.

Mathematical modeling of epidemics dates back to early twentieth century  with the seminal works of Ross \cite{ross} and Kermack and McKendrick \cite{kermack}. A widely used modeling  approach separates people in several compartments according to their infection state (e.g. susceptible, exposed, infected, recovered etc.) and derive differential equations describing the evolution of the population of each compartment (for a review see \cite{allen2008mathematical}).  Individual behaviors are essential to the description of the spread of epidemics. Thus, several  game theoretic models were developed,  to study voluntary vaccination \cite{Zhang1,Chang,Bauch1,Bauch2,Reluga1,Reluga2,Zhang2,Fine-Clarkson} and   voluntary implementation of Non-Pharmaceutical Interventions (NPIs) \cite{Reluga3,Poletti2,Poletti3,Kremer,Vardavas,Del_Valle,Chen2,Funk-Review,Funk1,Chen1,d'Onofrio}. Another closely related stream of research is the study of the adoption of decentralized protection strategies in engineered and social networks \cite{theodorakopoulos2012selfish,trajanovski2015decentralized,hota2019game,huang2019differential}.
Recently, with the emergence of the COVID--19 pandemic, there is a renewed interest in modeling individual behaviors. Related tools include dynamic game analysis of social distancing \cite{toxvaerd2020equilibrium,lee2021controlling,cho2020mean,tembine2020covid,aurell2020optimal,elie2020contact}, evolutionary game theory \cite{karlsson2020decisions,amaral2020epidemiological,ye2020modelling,kabir2020evolutionary} and  network game models \cite{lagos2020games,amini2020epidemic}.

This paper presents a game-theoretic model to describe the social distancing choices of individuals during an epidemic. Each player has an infection state, which can be Susceptible (S), Infected (I), or Removed (R). The probability that a player is at each health state evolves dynamically depending on the player's distancing behavior, the others' behavior, and the prevalence of the epidemic. We assume that the players care both about their health state and about maintaining their social contacts. The players may have different characteristics, e.g., vulnerable vs. less vulnerable, or care differently about maintaining their social contacts.

We assume that the players are not sure about their infection state, and thus they choose their actions based on their individually perceived probabilities of being susceptible, infected or removed. In contrast with most of the literature, in the current work, players -- even players with the same characteristics -- are allowed to behave differently. We first characterize the optimal action of a player, given the others' behavior, and show some monotonicity properties of optimal actions. We then prove the existence of a Nash equilibrium and characterize it in terms of a nonlinear complementarity problem.

Using the monotonicity of the optimal solution, we provide a simple reduced-order characterization of the Nash equilibrium in terms of a nonlinear programming problem. This formulation simplifies the computation of the equilibria drastically. Based on that result, we performed numerical studies, which verify that players with the same parameters may follow different strategies. This phenomenon seems realistic since people facing the same risks or belonging to the same age group often have different social distancing behaviors.

The rest of the paper is organized as follows. Section \ref{Sec_Model} presents the game theoretic model. In section \ref{Sec_Analysis}, we analyze the optimization problem of each player and prove some monotonicity properties. In Section  \ref{Sec_NEchar}, we prove the existence of the equilibrium and provide Nash equilibrium characterizations. Section  \ref{Sec_Num} presents some numerical results. Finally, the Appendix contains the proof of the results of the main text.

\section{The Model}
\label{Sec_Model}

This section presents the dynamic model for the epidemic spread and the social distancing game among the members of the society. 

We assume that the infection state of each agent could be Susceptible (S), Infected (I), Recovered (R), or Dead (D). A susceptible person gets infected at a rate proportional to the number of infected people she meets with.  An infected person either recovers or dies at constant rates which depend on her vulnerability. An individual who has recovered from the infection is immune, i.e., she could not get infected again.
The evolution of the infection state of an individual is shown in Figure \ref{markov_proc}.
\begin{figure}[h!]
\centering
  \includegraphics[width=0.35\textwidth]{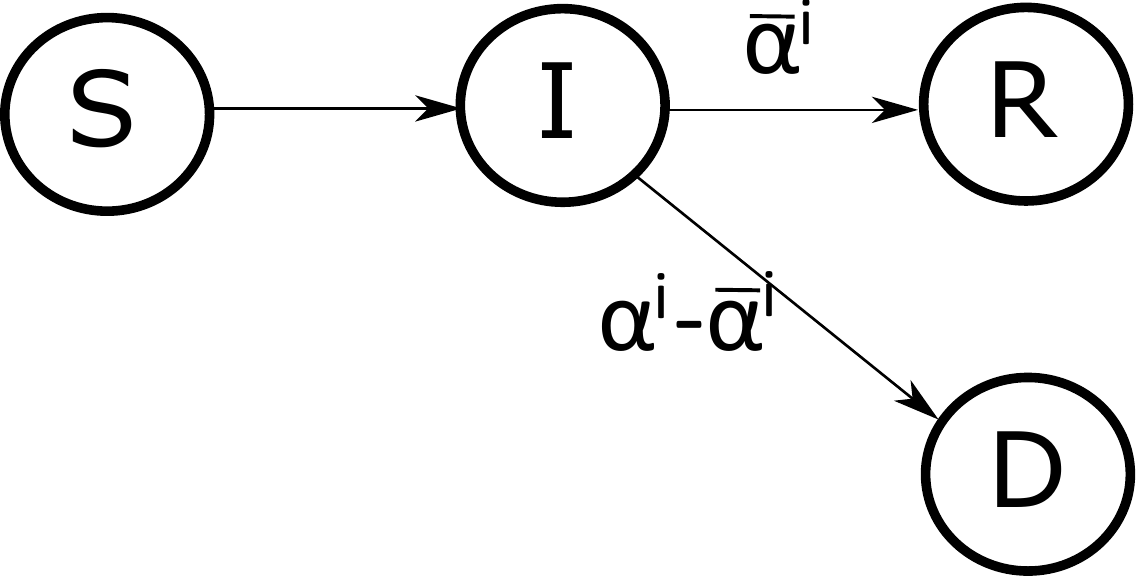}
  \caption{\textit{The  evolution of the infection state of each individual. }}
  \label{markov_proc}
\end{figure}

We assume that there is  a continuum of agents. This approximation is frequently used in  game--theoretic models dealing with a very large number of agents. The set of players is described by the measure space $([0,1),\mathcal B, \mu)$, where $\mathcal B$ is the Borel $\sigma$-algebra and $\mu$ the Lebesgue measure. That is, each player is indexed by an $i\in[0,1)$.

Denote by $S^i(t)$ the probability that player $i\in[0,1)$ is susceptible at time $t$  and by $I^i(t)$ the probability player $i$ is infected.
 The dynamics is given by:
\begin{equation}\label{individual_dynamics}
\begin{aligned}
\dot {S}^i &= -ru^iS^iI^f\\
\dot {I}^i &=ru^iS^iI^f-\alpha^i I^i\\
\end{aligned},
\end{equation}
where $r,\alpha^i$ are positive constants, and $u^i(t)$ is the action of player $i$ at time $t$. The quantity $u^i(t)$ describes player $i$'s socialization, which is proportional to the time she spends in public places. The quantity $I^f$, which denotes the density of infected people in public places, is given by:
\begin{equation}
I^f(t)=\int I^i(t)u^i(t)\mu(di).\label{I_F_def}
\end{equation}
For the actions of the players, we  assume that there are positive constants $u_m$, $u_M$, such that $u^i(t)\in [u_m,u_M]\subset [0,1]$. The constant $u_m$ describes the minimum social contacts needed for an agent to survive and $u_M$ is an upper bound posed by the government.

The cost function for player $i$ is given by:
\begin{equation}\label{payoffs}
  J^i=G^i(1-S^i(T))-s^i\int_{0}^{T}u^i(t)\tilde{u}(t)dt-s^i\int_{0}^{T}\kappa u^i(t)dt,
\end{equation}
where $T$ is the time horizon.
 The first term of \eqref{payoffs} corresponds to the disutility  a player experiences if she gets infected and the parameter $G^i>0$ depends on the vulnerability of the player.  The second term corresponds to the utility a player derives from the interaction with the other players, whose mean action is denoted  by $\tilde u(t)$:
\begin{equation}
\tilde u(t) =\int u^i(t) \mu(di). \label{u_tilde} 
\end{equation}
Finally, the third term indicates the interest of a person to  go outside. The relative magnitude of this desire is modeled by a positive constant $\kappa$.

Considering the auxiliary variable $\bar u(t)$:
\begin{equation}
\bar u(t) =\kappa+\tilde{u}, \label{u_bar_def}
\end{equation}
and computing $S(T)$ by solving \eqref{individual_dynamics}, the cost can be written equivalently as:
\begin{equation}\label{nonlinear_costs}
  J^i =G^i\left(1-S^i(0)e^{-r\int_{0}^{T}u^i(t)I^f(t)dt}\right)-s^i\int_{0}^{T}u^i(t)\bar{u}(t)dt.
\end{equation}

\textbf{Assumption 1}\textit{(Finite number of types):} There are $M$ types of players. Particularly, there are $M+1$ values $0=\bar i_0<\dots<\bar i_M=1$ such that the functions $S^i(0),I^i(0),G^i,s^i,\alpha^i:[0,1)\rightarrow \mathbb R$ are
constant for $i\in [\bar i_0,\bar i_1)$, $i\in[\bar i_1,\bar i_2),\dots$,  $i\in[\bar i_{M-1},\bar i_{M}) $. Denote by $m_j=\mu([\bar i_{j-1},\bar i_{j}))$ the mass of the players of type $j$. Of course $m_1+\dots+m_M=1.$

\begin{remark}
The finite number of types assumption is very common in many applications dealing with a large number of agents. For example, in the current COVID--19 pandemic, people are grouped based on their age and/or underlying diseases to be prioritised for vaccination.  Assumption 1, combined with some results of the following section, is convenient to describe the evolution of the states of a continuum of players using a finite number of differential equations.
\end{remark}

\textbf{Assumption 2}\textit{(Piecewise--constant actions):} The interval $[0,T)$ can be divided in subintervals $[t_k,t_{k+1})$, with $t_0=0<t_1<\dots<t_N=T$, such that the actions of the players are constant in these intervals.

\begin{remark}
Assumption 2 indicates that people decide only a finite number of times ($t_k$) and follow their decisions for a time interval $[t_k,t_{k+1})$.  A reasonable length for that time interval could be  1 week.
\end{remark}

The action of player $i$ in the interval $[t_k,t_{k+1})$ is denoted by $u^i_k$. 

\textbf{Assumption 3}\textit{(Measurability of the actions):}
The function $u^\cdot_k:[0,1)\rightarrow [u_m,u_M]$ is measurable. 

Under Assumptions 1--3, there is a unique solution to differential equations \eqref{individual_dynamics}, with  initial conditions $S^\cdot (0),I^\cdot(0)$,  and the the integrals in \eqref{I_F_def}, \eqref{u_tilde} are well-defined (see Appendix \ref{ProofExistenceDE}). We use the following notation: 
$$\bar u_k=\int_{t_k}^{t_{k+1}} \bar u  dt, \text{~~~ and~~~~} I^f_k = \int_{t_k}^{t_{k+1}} I^f(t) dt.$$ 



For each player we define an auxiliary cost, by dropping the fixed terms of \eqref{nonlinear_costs} and dividing by $s^i$:
\begin{equation}
  \tilde J^i(u^i)=-b^i\exp\left[-r\sum_{k=0}^{N-1}u^i_kI^f_k\right]-\sum_{k=0}^{N-1}u^i_k\bar{u}_k,
  \label{AuxilCost}
\end{equation}
where $b^i=S^i(0)G^i/s^i$, and $u^i=[u^i_0,\dots,u^i_{N-1}]^T$. Denote by $U=[u_m,u_M]^N$, the set of possible actions for each player. Observe that $u^i$ minimizes $J^i$ over the feasible set $U$ if and only if it minimizes the auxiliary cost $\tilde J^i$. Thus, the optimization problem for player $i$ is equivalent to:
\begin{align}
\underset{u^i\in U}{\text{minimize}}~\tilde J^i(u^i).\label{aux_opt_pr}
\end{align}

\textbf{Assumption 4:} For a player $i$ of type $j$ denote $b_j=b^i$. Assume that  the different types of players have different $b_j$'s. Without loss of generality assume that $b_1<b_2<\dots<b_M.$

\textbf{Assumption 5:} Each player $i$ has access only to the probabilities $S^i$ and $I^i$ and the aggregate quantities $\bar u$ and $I^f$, but not the actual infection states.

\begin{remark}     This assumption  is reasonable in cases where the test availability is very sparse, so the agents are not able to have a reliable feedback for their estimated health states.
\end{remark}
In the rest of the paper we suppose that Assumptions 1--5 are satisfied.

\section{Analysis of the Optimization Problem of Each Player}
\label{Sec_Analysis}
In this section, we analyze the optimization problem for a representative player $i$, given $\bar u_k$ and $I^f_k>0$, for $k=0,\dots,N-1$.

Let us first define the following composite optimization problem:
\begin{align}
\underset{A}{\text{minimize}}\left\{ -b^ie^{-A}+ f(A) \right\},
\label{Probl2}
\end{align}
where:
\begin{align}
f(A)=\inf_{u^i\in U}\left\{ -\sum_{k=0}^{N-1}u^i_k\bar{u}_k:  \sum_{k=0}^{N-1}u^i_kI^f_k=A/r  \right\}.
\label{f_def}
\end{align}

The following proposition proves that  \eqref{aux_opt_pr} and \eqref{Probl2} are equivalent and expresses their solution in a simple threshold form.

\begin{proposition}
\label{EquProp}
\begin{itemize}
\item[(i)] If $u^i$ is optimal for \eqref{aux_opt_pr}, then $u^i\in \tilde U=\{u_m,u_M\}^N$.
\item[(ii)] Problems \eqref{aux_opt_pr} and \eqref{Probl2} are equivalent, in the sense that they have the same optimal values, and $u^i$ minimizes \eqref{aux_opt_pr} if and only if there is an optimal $A$ for \eqref{Probl2} such that $u^i$ attains the minimum in \eqref{f_def}.
\item[(iii)] Let $A_m=ru_m\sum_{k=0}^{N-1}I^f_k $ and $A_M=ru_M\sum_{k=0}^{N-1}I^f_k $. For $A\in [A_m,A_M]$, the function $f$ is continuous, non-increasing, convex and piecewise affine. Furthermore, it has at most $N$ affine pieces and $f(A)=\infty$, for $A\not \in[A_m,A_M]$.
\item[(iv)] There are at most $N+1$ vectors $u^i\in U$ that minimize \eqref{aux_opt_pr}.
\item[(v)] If $u^i$ is optimal for  \eqref{aux_opt_pr}, then there is a $\lambda'$ such that $\bar u_k/I^f_k\leq \lambda'$ implies $u^i_k=u_m$, and $\bar u_k/I^f_k>\lambda'$ implies $u^i_k=u_M$.
\end{itemize}
\end{proposition}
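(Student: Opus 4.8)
The plan is to analyze the inner problem \eqref{f_def} first, since everything else follows from understanding its structure. For fixed $A$, the inner problem is a linear program in $u^i$: minimize the linear objective $-\sum_k u^i_k \bar u_k$ over the box $U = [u_m,u_M]^N$ subject to the single linear equality constraint $\sum_k u^i_k I^f_k = A/r$. First I would observe that this LP is feasible precisely when $A \in [A_m, A_M]$ (since $\sum_k u^i_k I^f_k$ ranges continuously over $[A_m/r \cdot r, \ldots]$, i.e.\ over $[u_m \sum I^f_k, u_M \sum I^f_k]$ scaled by $r$), giving the last claim of (iii). By standard LP duality / parametric LP theory, the optimal value $f(A)$ of a linear program with right-hand side $A$ is convex, piecewise affine, and continuous on the (interval) domain where it is finite; monotonicity (non-increasing) follows because increasing $A$ enlarges... more precisely, because $\bar u_k, I^f_k > 0$, pushing more ``mass'' $\sum u^i_k I^f_k$ up lets us also collect more utility $\sum u^i_k \bar u_k$, so $-\sum u^i_k\bar u_k$ decreases. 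This establishes (iii) except for the bound of $N$ affine pieces.

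For the threshold structure, which simultaneously gives (i), (v), and the piece count in (iii), I would invoke the dual of the inner LP: attaching multiplier $\lambda$ to the equality constraint, the Lagrangian decouples across coordinates as $\sum_k u^i_k(\lambda I^f_k - \bar u_k)$, so a minimizer sets $u^i_k = u_M$ when $\bar u_k/I^f_k > \lambda$ and $u^i_k = u_m$ when $\bar u_k/I^f_k < \lambda$ (with ties allowing any value in $[u_m,u_M]$, but a vertex optimum always exists taking $u_m$ or $u_M$). Thus every extreme optimal solution of the inner LP lies in $\tilde U = \{u_m,u_M\}^N$; sorting the ratios $\bar u_k/I^f_k$, as $\lambda$ sweeps from $A_m$-side to $A_M$-side the optimal $u^i$ takes at most $N+1$ distinct values and $f$ has at most $N$ breakpoints, hence at most $N$ affine pieces. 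This is the heart of the argument and the step I expect to require the most care: handling the tie case $\bar u_k/I^f_k = \lambda$ correctly and arguing that one may always choose $\lambda'$ as a single threshold (resolving all ties to one side) so that the ``$\leq$'' / ``$>$'' dichotomy in (v) holds exactly.

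For the equivalence (ii), I would argue both inclusions. Given any $u^i \in U$, set $A = r\sum_k u^i_k I^f_k$; then $u^i$ is feasible in \eqref{f_def} so $f(A) \leq -\sum_k u^i_k \bar u_k$, whence $\tilde J^i(u^i) \geq -b^i e^{-A} + f(A) \geq \inf_A\{-b^i e^{-A} + f(A)\}$; taking the infimum over $u^i$ shows the optimal value of \eqref{aux_opt_pr} is $\geq$ that of \eqref{Probl2}. Conversely, for any $A \in [A_m, A_M]$ the infimum in \eqref{f_def} is attained (compactness of $U$ and continuity), by some $u^i$ with $\tilde J^i(u^i) = -b^i e^{-A} + f(A)$, giving the reverse inequality and the stated correspondence between minimizers. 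Finally, (i) is immediate from (ii) plus the first part of the threshold analysis (an optimal $A$ exists because $-b^i e^{-A} + f(A)$ is continuous on the compact interval $[A_m,A_M]$), and (iv) follows because once the optimal set of $A$-values is identified — itself an interval, being where the convex function $-b^i e^{-A} + f(A)$ is minimized, and its minimizers over the breakpoint-subdivided domain number at most $N+1$ — each optimal $A$ corresponds via the threshold rule to one vertex $u^i \in \tilde U$, for a total of at most $N+1$.
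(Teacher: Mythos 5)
Your overall architecture (analyze the inner LP parametrically, get the threshold structure from the dual, prove the equivalence by two inequalities) matches the paper's, and parts (ii), (iii) and (v) are essentially sound. But there are two genuine gaps. First, your route to (i) fails: you claim that ``a vertex optimum always exists taking $u_m$ or $u_M$'' for the inner LP \eqref{f_def}, but for a generic $A$ strictly inside an affine piece of $f$ the equality constraint $\sum_k u^i_k I^f_k=A/r$ forces exactly one coordinate to take a fractional value in $(u_m,u_M)$ --- the extreme points of the polytope $U\cap\{\sum_k u^i_k I^f_k=A/r\}$ are not in $\tilde U$, so the minimizer of \eqref{f_def} is not in $\tilde U$ for such $A$. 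The paper proves (i) in one line by a completely different and much more elementary observation: since $rI^f_k>0$ and $b^i>0$, the map $u^i_k\mapsto \tilde J^i$ is strictly concave coordinate-by-coordinate, so every coordinate of a minimizer sits at an endpoint. (Alternatively, (i) can be salvaged in your framework, but only after establishing that the optimal $A$ must be a breakpoint of $f$ or an endpoint of $[A_m,A_M]$, which is the content of (iv).)

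Second, in (iv) you describe $-b^ie^{-A}+f(A)$ as a \emph{convex} function whose minimizers form an interval. This is backwards and it breaks the counting argument: $-b^ie^{-A}$ is strictly concave and $f$ is affine on each of its at most $N$ pieces, so the composite objective is \emph{strictly concave on each piece}; its minimum over each closed piece is therefore attained only at an endpoint, and the candidate optimal $A$'s are exactly the at most $N+1$ breakpoints (including $A_m$ and $A_M$). If the objective were convex, the minimizing set could be a nondegenerate interval and the bound $N+1$ on the number of optimal $u^i$ would not follow. You also omit the final step the paper proves carefully: that at each non-differentiability point of $f$ (and at $A_m$, $A_M$) the minimizer of \eqref{f_def} is \emph{unique}, which is what converts ``at most $N+1$ optimal values of $A$'' into ``at most $N+1$ optimal vectors $u^i$.'' The paper establishes this uniqueness by an explicit perturbation argument multiplying the feasibility identity by the critical ratio $\bar u_{k_0'}/I^f_{k_0'}$; some such argument is needed and is not supplied in your proposal.
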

\begin{proof} See Appendix \ref{ProofEquProp}\end{proof}

\begin{remark}
The fact that the optimal value of a linear program is a convex  function of the constraints constants is known in the literature (e.g., see \cite{Shapiro} chapter 2).	 Thus, the convexity of the function $f$ is already known from the literature.
\end{remark}

\begin{corollary}
\label{StrategyCorollary}
There is a simple way to solve the optimization problem \eqref{aux_opt_pr} using the following steps:
\begin{itemize}
\item[1.] Compute  $\Lambda = \{\bar u_k/I^f_k: k=0,\dots,N-1\}\cup\{0\}$.
\item[2.] For all $\lambda'\in \Lambda $ compute  $u^{\lambda'} $ with: $$u^{\lambda'}_k = \begin{cases} u_M ~~~\text{if ~}  \bar u_k/I_k^f> \lambda'\\
 u_m ~~~\text{if ~}  \bar u_k/I_k^f\leq \lambda' \end{cases},$$
and $J^i(u^{\lambda'})$.
\item[3.] Compare the values of $J^i(u^{\lambda'})$, for all $\lambda'\in \Lambda$ and choose the minimum.
\end{itemize}
\end{corollary}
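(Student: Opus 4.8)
The plan is to read the corollary off directly from Proposition \ref{EquProp}, especially the threshold description in part (v). First I would note that $\tilde J^i$ is continuous on the compact set $U$, so \eqref{aux_opt_pr} has a minimizer $u^i$; by Proposition \ref{EquProp}(v), this $u^i$ has the threshold form $u^i_k = u_m$ when $\bar u_k/I^f_k \le \lambda'$ and $u^i_k = u_M$ when $\bar u_k/I^f_k > \lambda'$, for some real $\lambda'$. In other words $u^i$ lies in the one-parameter family of policies $u^{\lambda'}$ defined in Step 2, with $\lambda'$ ranging over $\mathbb{R}$. The whole content of the corollary is therefore that this family, taken over all real thresholds, is already exhausted by the finitely many $\lambda'$ in the set $\Lambda$ of Step 1; granting that, the minimum of \eqref{aux_opt_pr} is attained among $\{u^{\lambda'}: \lambda' \in \Lambda\}$ and Step 3 simply selects it.

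To establish that reduction I would argue as follows. For each fixed $k$, the map $\lambda' \mapsto u^{\lambda'}_k$ is a step function, equal to $u_M$ for $\lambda' < \bar u_k/I^f_k$ and to $u_m$ for $\lambda' \ge \bar u_k/I^f_k$, with its only jump at $\bar u_k/I^f_k$. Hence $\lambda' \mapsto u^{\lambda'}$ is piecewise constant with all jumps among the $N$ numbers $\bar u_k/I^f_k$, so every vector in the family is attained either at one of these breakpoints or on the unbounded interval lying below all of them. On that lower interval one has $u^{\lambda'}_k = u_M$ for all $k$; and since $\bar u_k = \int_{t_k}^{t_{k+1}} (\kappa + \tilde u)\,dt \ge \kappa (t_{k+1} - t_k) > 0$ while $I^f_k > 0$ by hypothesis, all the ratios $\bar u_k/I^f_k$ are strictly positive, so $0$ sits in that lower interval and $u^0$ is exactly this all-$u_M$ vector. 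Thus every threshold policy equals $u^{\lambda'}$ for some $\lambda' \in \{\bar u_k/I^f_k : k = 0,\dots,N-1\} \cup \{0\} = \Lambda$, which has at most $N+1$ elements (consistent with Proposition \ref{EquProp}(iv)).

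Putting the two pieces together, the minimizer $u^i$ of \eqref{aux_opt_pr} equals $u^{\lambda'}$ for some $\lambda' \in \Lambda$, so $\min_{u \in U}\tilde J^i(u) = \min_{\lambda' \in \Lambda}\tilde J^i(u^{\lambda'})$; since minimizing $\tilde J^i$ over $U$ is equivalent to minimizing $J^i$, comparing the values $J^i(u^{\lambda'})$ over $\lambda' \in \Lambda$ as in Step 3 returns an optimal action, and this is precisely the claimed procedure.

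I do not expect any genuinely hard step here, since Proposition \ref{EquProp}(v) carries the analytical weight. The only point needing care is the bookkeeping that turns the real-parameter threshold family into the finite list indexed by $\Lambda$ — in particular checking that the extreme ``all $u_M$'' policy is not missed, which is exactly why the artificial value $0$ is adjoined to $\Lambda$ and why one uses the strict positivity of $\bar u_k$ (coming from $\kappa > 0$) and of $I^f_k$.
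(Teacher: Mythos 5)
Your argument is correct and follows exactly the route the paper intends: the corollary is stated as an immediate consequence of Proposition~\ref{EquProp}(v), and your contribution is the (needed but routine) bookkeeping showing that the real-parameter threshold family collapses to the finite list indexed by $\Lambda$, with $0$ adjoined precisely to capture the all-$u_M$ policy via the strict positivity of $\bar u_k$ and $I^f_k$. The paper gives no separate proof of this corollary, and your write-up supplies the missing details faithfully, including the reduction from $J^i$ to $\tilde J^i$.
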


We then prove some monotonicity properties for the optimal control.

\begin{proposition} \label{MonotProp}
Assume that for two players $i_1$ and $i_2$, with parameters $b^{i_1}$ and $b^{i_2}$,  the minimizers of \eqref{Probl2} are  $A_1$ and $A_2$ respectively and $u^{i_1}$ and $u^{i_2} $ are the corresponding optimal actions. Then:
\begin{itemize}
\item[(i)] If  $b^{i_1}<b^{i_2}$, then $A_1\geq A_2$.
\item[(ii)] If $b^{i_1}<b^{i_2}$, then $u_k^{i_2}\leq u_k^{i_1}$, for  $k=0,\dots,N-1$.
\item[(iii)] If   $b^{i_1}=b^{i_2}$,
then either $u_k^{i_2}\leq u_k^{i_1}$ for all $k$, or $u_k^{i_1}\leq u_k^{i_2}$ for all $k$.
\end{itemize}
\end{proposition}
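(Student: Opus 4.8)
The plan is to exploit the composite optimization structure from Proposition~\ref{EquProp} and the concavity/monotonicity of the value function $f$. First I would set up the one-dimensional picture: by Proposition~\ref{EquProp}(iii), $f$ is convex, non-increasing and piecewise affine on $[A_m,A_M]$, so for each player $i$ the inner problem \eqref{Probl2} amounts to minimizing $g^i(A) = -b^i e^{-A} + f(A)$ over the compact interval $[A_m,A_M]$. The term $-b^i e^{-A}$ is strictly increasing and strictly convex in $A$. For part (i), suppose toward a contradiction that $b^{i_1} < b^{i_2}$ but $A_1 < A_2$. Using optimality of $A_1$ for $g^{i_1}$ and of $A_2$ for $g^{i_2}$, I would write the two inequalities $g^{i_1}(A_1) \le g^{i_1}(A_2)$ and $g^{i_2}(A_2) \le g^{i_2}(A_1)$, add them, and let the $f$-terms cancel; what remains is $-b^{i_1}e^{-A_1} - b^{i_2}e^{-A_2} \le -b^{i_1}e^{-A_2} - b^{i_2}e^{-A_1}$, i.e. $(b^{i_2}-b^{i_1})(e^{-A_1} - e^{-A_2}) \le 0$. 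Since $b^{i_2} - b^{i_1} > 0$ and $A_1 < A_2$ forces $e^{-A_1} - e^{-A_2} > 0$, this is a contradiction; hence $A_1 \ge A_2$. (This is the standard single-crossing / supermodularity argument: $-b e^{-A}$ has increasing differences in $(A, -b)$.)

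For part (ii), I would translate the ordering $A_1 \ge A_2$ of the scalar minimizers back to the action vectors via the threshold characterization of Proposition~\ref{EquProp}(v) together with the description of the inner minimization \eqref{f_def}. The point is that the optimizer $u^i$ achieving $f(A)$ is, up to the single "fractional" coordinate, obtained by a water-filling / greedy rule on the ratios $\bar u_k / I^f_k$: the coordinates are turned on (set to $u_M$) in decreasing order of $\bar u_k/I^f_k$ until the budget constraint $\sum_k u^i_k I^f_k = A/r$ is exhausted. A larger budget $A_1/r \ge A_2/r$ therefore yields a pointwise-larger action vector, $u^{i_1}_k \ge u^{i_2}_k$ for every $k$ — here I need to be a little careful to pick, among the possibly several optimizers at a given $A$, ones that are consistent with a common threshold $\lambda'$, which Proposition~\ref{EquProp}(v) guarantees. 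I would phrase this cleanly by noting that for a common threshold value $\lambda$, the map $\lambda \mapsto (\text{action with } u_k = u_M \text{ iff } \bar u_k/I^f_k > \lambda)$ is coordinatewise non-increasing in $\lambda$, and that a larger $A$ corresponds to a smaller threshold.

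For part (iii), when $b^{i_1} = b^{i_2}$ the two players solve the identical problem \eqref{Probl2}, so the sets of optimal $A$'s coincide; if $A_1 = A_2$ I still must compare two possibly distinct optimizers of the same $f(A)$, and if $A_1 \ne A_2$ I reduce to the monotonicity already used in (ii). In both cases the conclusion follows from the same observation: any two optimal action vectors (whether at the same or at different optimal values of $A$) are nested, because each is determined by a threshold rule on the fixed, player-independent ratios $\bar u_k/I^f_k$, and thresholds are totally ordered. I would make this precise by showing that the collection of all optimal action vectors over all optimal $A$ forms a chain under the coordinatewise order. The main obstacle I anticipate is the bookkeeping around non-uniqueness — both the possible flat piece of $g^i$ (a whole interval of optimal $A$) and the at-most-one fractional coordinate in the inner LP — so the bulk of the work is in stating the threshold/water-filling structure of \eqref{f_def} carefully enough that parts (ii) and (iii) become immediate; the inequality-addition trick in part (i) is routine once that structure is in place.
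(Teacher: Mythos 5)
Your proposal is correct and follows essentially the same route as the paper: part (i) via the same add-the-two-optimality-inequalities (increasing differences) trick, and parts (ii) and (iii) via the threshold characterization of Proposition \ref{EquProp}(v) on the ratios $\bar u_k/I^f_k$, with the ordering of thresholds inherited from $A_1\geq A_2$ and the total order of thresholds giving the nestedness in (iii). The paper handles the non-uniqueness concern you flag in (iii) by a short direct contradiction (two crossed coordinates would force $\lambda_2'<\bar u_{k_1}/I^f_{k_1}\leq\lambda_1'$ and $\lambda_1'\leq\bar u_{k_2}/I^f_{k_2}<\lambda_2'$), which is the same idea you describe.
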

\begin{proof} See Appendix \ref{ProofMonotProp}.
\end{proof}

\begin{remark}
Proposition \ref{MonotProp}.(ii) expresses of the fact that if (a) a person is more vulnerable, i.e., she has large $G^i$, or (b) she derives less utility from the interaction with the others, i.e., she has smaller $s^i$, or (c) it is more likely that she is not yet infected, i.e., she has larger $S^i(0)$, then she interacts less with the others. It is probably interesting that small differences in $S^i(0)$ can be amplified.
\end{remark}

\section{Nash Equilibrium Existence and Characterization}
\label{Sec_NEchar}
 \subsection{Existence and NCP characterization}
In this section, we prove the existence of a Nash equilibrium and characterize it in terms of a Nonlinear Complementarity Problem (NCP).

We consider the set $\tilde U=\{u_m,u_M\}^N$, defined in Proposition \ref{EquProp}. Let $v_1,\dots,v_{2^N}$ be the members of the set  $\tilde U$, and $p^j_l$ be the mass  of players of type $j$ following action $v_l \in\tilde U$. Let also $p^j=[p^j_1,\dots,p^j_{2^N}]$ be the distribution  of actions of the players of type $j$ and $\pi=[p^1,\dots,p^M]$  be the distribution of the actions of all the players. 

Denote by:
\begin{equation}
\Delta_j = \{p^j\in \mathbb R^{2^N}: p^j_l\geq 0, \sum_{l=1}^{2^N} p^j_l= m_j \},
\end{equation} 
the set of possible distributions of actions of the players of type $j$ and by $\Pi=\Delta_1\times\dots\times \Delta_M$ the set of all possible distributions.

Finally, let $F:\Pi\rightarrow \mathbb R^{2^N\cdot M}$ be the vector function of auxiliary costs, that is, the component $F_{(j-1)2^N+l}(\pi)$ is  the auxiliary cost of the players of type $j$ playing a strategy $v^l$, as introduced in \eqref{AuxilCost}, when the distribution of actions is $\pi$. We denote $F^j(\pi)=[F_{(j-1)2^N+1}(\pi),...,F_{j2^N}(\pi)]$ the vector of the auxiliary costs of the players of type $j$ playing $v^l$, $l=1,\dots,2^N$.

Let us recall the notion of a Nash equilibrium  for games with a continuum of players (e.g. \cite{Mas_Colell}).
\begin{definition}
A distribution of actions $\pi\in\Pi$ is a Nash equilibrium if for all $j=1,\dots,M$ and $l=1,\dots,2^N$:
\begin{equation}
\pi_{(j-1)2^N+l}>0 \implies l\in \underset{l'}{\arg\min}~F_{(j-1)2^N+l'}(\pi)
\end{equation}
\end{definition}

Let $\delta^j(\pi)$ be the value of problem \eqref{aux_opt_pr}, i.e., the minimum value of the auxiliary cost of an agent of type $j$. This value depends on $\pi$, through the terms $I^f$ and $\bar u$.
Define $\Phi^j(\pi)=F^j(\pi)-\delta^j(\pi)$ and $\Phi(\pi)=[\Phi^1(\pi)...\Phi^M(\pi)]$.
We then characterize a Nash equilibrium in terms of a Nonlinear Complementarity Problem (NCP):
\begin{equation}\label{NCP_large}
0\leq \pi \perp \Phi(\pi)\geq 0,
\end{equation}
where $\pi \perp \Phi(\pi)$ means that $\pi^T\Phi(\pi)=0$.

\begin{proposition}\label{NCP_Prop}\begin{itemize}
\item[(i)]
 A distribution $\pi\in \Pi$ corresponds to a Nash equilibrium if and only if it satisfies the NCP \eqref{NCP_large}.
 \item[(ii)] A distribution $\pi\in \Pi$ corresponds to a Nash equilibrium if and only if it satisfies the variational inequality:
\begin{equation}(\pi'-\pi)^TF(\pi)\geq 0, \text{~~for all } \pi'\in\Pi\label{VI_large}\end{equation}
\item[(iii)] There exists a Nash equilibrium.
\end{itemize}
\end{proposition}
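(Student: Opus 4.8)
The plan is to establish the three claims in order, since (i) reduces the Nash condition to the NCP \eqref{NCP_large}, (ii) is the standard equivalence between a complementarity problem over a product of simplices and a variational inequality, and (iii) then follows from a compactness/continuity argument. For part (i), I would argue directly from the definition of the Nash equilibrium. Fix $\pi \in \Pi$. Since $\Phi^j(\pi) = F^j(\pi) - \delta^j(\pi)\mathbbm{1}$ and $\delta^j(\pi) = \min_{l'} F_{(j-1)2^N+l'}(\pi)$ by definition of $\delta^j$, every component of $\Phi(\pi)$ is nonnegative, so the constraint $\Phi(\pi)\geq 0$ holds automatically for every $\pi\in\Pi$; only $\pi \geq 0$ (true on $\Pi$) and the orthogonality $\pi^T\Phi(\pi)=0$ carry content. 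Because both $\pi$ and $\Phi(\pi)$ are entrywise nonnegative, $\pi^T\Phi(\pi)=0$ is equivalent to $\pi_{(j-1)2^N+l}\,\Phi_{(j-1)2^N+l}(\pi)=0$ for every $j,l$, i.e. $\pi_{(j-1)2^N+l}>0 \implies F_{(j-1)2^N+l}(\pi)=\delta^j(\pi)=\min_{l'}F_{(j-1)2^N+l'}(\pi)$, which is exactly the Nash condition. Conversely a Nash distribution satisfies this, hence the NCP. This step is essentially bookkeeping.

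For part (ii), I would use the fact that $\Pi = \Delta_1\times\dots\times\Delta_M$ where each $\Delta_j$ is a scaled simplex. The variational inequality \eqref{VI_large} says $\pi$ minimizes the linear functional $\pi'\mapsto (\pi')^TF(\pi)$ over $\Pi$; since this separates across types and over each simplex $\Delta_j$ a linear functional is minimized exactly at distributions supported on the argmin coordinates of $F^j(\pi)$, \eqref{VI_large} holds iff $\pi_{(j-1)2^N+l}>0\implies l\in\arg\min_{l'}F_{(j-1)2^N+l'}(\pi)$ — the Nash condition again. Alternatively, and more mechanically, one shows \eqref{VI_large} $\iff$ \eqref{NCP_large}: writing the KKT/normal-cone conditions for the VI over the product of simplices, the multiplier for the equality constraint on $\Delta_j$ is precisely $\delta^j(\pi)$, and the sign and complementarity conditions on the remaining multipliers reproduce $0\leq\pi\perp\Phi(\pi)\geq 0$. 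Either route is routine; I would present the first.

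For part (iii), the natural approach is a fixed-point / standard VI-existence argument. The set $\Pi$ is nonempty, compact and convex (a product of scaled simplices in $\mathbb R^{2^N M}$). If $F:\Pi\to\mathbb R^{2^N M}$ is continuous, then the variational inequality \eqref{VI_large} has a solution by the classical existence theorem for VIs on compact convex sets (equivalently, apply Brouwer to the map $\pi\mapsto \mathrm{proj}_\Pi(\pi - F(\pi))$, or invoke Kakutani on the best-response correspondence $\pi\mapsto \arg\min_{\pi'\in\Pi}(\pi')^TF(\pi)$, whose values are nonempty convex compact and which is upper hemicontinuous because $F$ is continuous). By part (ii) this solution is a Nash equilibrium. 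The one genuine thing to check — and the main obstacle — is the continuity of $F$ in $\pi$. This requires tracing how $\pi$ enters the auxiliary costs \eqref{AuxilCost}: $\pi$ determines the mean action $\bar u(t)$ via \eqref{u_tilde}, \eqref{u_bar_def} and the infected-in-public density $I^f(t)$ via \eqref{I_F_def}, which in turn feed the aggregated quantities $\bar u_k$ and $I^f_k$. Under Assumptions 1–3 the state trajectories $S^i,I^i$ solve the ODE system \eqref{individual_dynamics} driven by $I^f$, and one must argue that $I^f_k$ and $\bar u_k$ depend continuously (indeed smoothly/affinely in the case of $\bar u_k$) on $\pi$ — using continuous dependence of ODE solutions on parameters, plus the fact that with finitely many types and piecewise-constant actions all the integrals reduce to finite sums. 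Once $\bar u_k$ and $I^f_k$ are continuous in $\pi$, the explicit formula \eqref{AuxilCost} makes each $F_{(j-1)2^N+l}$ a continuous (indeed real-analytic) function of those quantities, hence of $\pi$, and the existence argument closes. I would isolate this continuity claim as the technical core, likely deferring its detailed verification to the Appendix alongside the well-posedness of \eqref{individual_dynamics}.
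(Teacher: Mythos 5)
Your proposal is correct and follows essentially the same route as the paper: part (i) by direct bookkeeping from the definition of $\delta^j$ and entrywise nonnegativity, part (ii) via the equivalence of the VI with minimizing a linear functional over the product of scaled simplices (the paper phrases the converse as an explicit contradiction by shifting mass to an argmin coordinate, which is the same idea), and part (iii) by compactness and convexity of $\Pi$ plus a standard VI existence theorem, with the continuity of $F$ in $\pi$ --- traced through $\bar u_k$, $I^f_k$ and continuous dependence of the ODE solutions on parameters --- correctly isolated as the technical core, exactly as the paper does in its Lemma on continuity.
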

\begin{proof} See Appendix \ref{ProofNCP_Prop}.
\end{proof}

\begin{remark}
In principle, we can use algorithms for NCPs to find Nash equilibria. The problem is that the number of decision variables grows exponentially with the number of decision steps. Thus, we expect that such methods would be applicable only for small values of $N$.
\end{remark}

 \subsection{Structure and Reduced Order Characterization}
\label{ReductionSubsec}
In this section, we use the monotonicity of the optimal strategies, shown in Proposition \ref{MonotProp}, to derive a reduced order characterization of the Nash equilibrium.

The actions on a Nash equilibrium have an interesting structure. Assume that $\pi$ is a Nash equilibrium and:
\begin{equation}\label{V_def}
  \mathcal V = \{v^l\in \tilde U: \exists j: \pi_{(j-1)2^N+l}>0\}\subset \tilde U,
\end{equation}
is the set of actions used by a set of players with a positive mass. Let us define a partial ordering on $\tilde U$. For $v^1,v^2\in\tilde U$, we write  $v^1\preceq v^2$ if $v^1_k\leq v^2_k$ for all $k=1,\dots,N$. Proposition \ref{MonotProp}.(iii) implies that $\mathcal V$ is a totally ordered subset of $\tilde U$ (chain).

\begin{lemma}\label{ChainLem}
There are at most $N!$ maximal chains in $\tilde U$, each of which has  length $N+1$. Thus, at a Nash equilibrium, there are at most $N+1$ different actions in $\mathcal V$.
\end{lemma}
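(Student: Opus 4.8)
The statement decomposes into two parts: a combinatorial fact about the hypercube poset $\tilde U=\{u_m,u_M\}^N$ under the coordinatewise order $\preceq$, and then the application to the Nash equilibrium via Proposition \ref{MonotProp}.(iii). The plan is to handle the combinatorics first and then invoke the earlier monotonicity result.

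\textbf{Step 1: Identify $\tilde U$ with the Boolean lattice.} Since each coordinate $u^i_k$ takes only the two values $u_m<u_M$, the map sending $v\in\tilde U$ to its support $\{k: v_k=u_M\}\subseteq\{1,\dots,N\}$ is an order isomorphism from $(\tilde U,\preceq)$ onto $(2^{\{1,\dots,N\}},\subseteq)$. Under this identification a chain in $\tilde U$ is exactly a chain of subsets, and a maximal chain is a maximal flag $\emptyset=S_0\subsetneq S_1\subsetneq\dots\subsetneq S_N=\{1,\dots,N\}$. Each such flag has each $S_m$ of cardinality exactly $m$ (a maximal chain in the Boolean lattice cannot skip a rank, since between two sets differing by $\geq 2$ elements one can always insert an intermediate set), so every maximal chain has exactly $N+1$ elements. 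Moreover a maximal chain is determined by the order in which the $N$ coordinates are ``switched on'': going from $S_{m-1}$ to $S_m$ adds exactly one new index, and the sequence of added indices is a permutation of $\{1,\dots,N\}$. Hence there are exactly $N!$ maximal chains, giving the ``at most $N!$'' (in fact exactly $N!$) and ``length $N+1$'' claims.

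\textbf{Step 2: From the structure of $\mathcal V$ to the bound.} By the paragraph preceding the lemma, Proposition \ref{MonotProp}.(iii) shows that for any two action vectors $v^1,v^2\in\mathcal V$ used at the equilibrium (each being optimal for the corresponding player's problem \eqref{Probl2} against the equilibrium aggregates), we have either $v^1\preceq v^2$ or $v^2\preceq v^1$; that is, $\mathcal V$ is a chain in $\tilde U$. (For vectors used by players of the same type this is (iii) directly; for different types it is (ii), since $b^{i_1}\neq b^{i_2}$ by Assumption 4.) Any chain is contained in some maximal chain, which by Step 1 has $N+1$ elements, so $|\mathcal V|\leq N+1$. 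Since the number of distinct actions present at the equilibrium is exactly $|\mathcal V|$, the lemma follows.

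\textbf{Main obstacle.} There is no deep obstacle here; the content is essentially bookkeeping. The one point that deserves care is justifying that a chain in $\tilde U$ extends to a maximal chain of length exactly $N+1$ — i.e., that the Boolean lattice is graded with all maximal chains of the same length — rather than merely bounding chain length by the longest antichain-free argument. I would make this explicit by the rank (cardinality) function argument above: along any chain the cardinalities are strictly increasing integers in $\{0,1,\dots,N\}$, so a chain has at most $N+1$ elements, with equality precisely for maximal chains. A secondary point is to make sure the reduction in Step 2 correctly covers both the same-type case (Proposition \ref{MonotProp}.(iii)) and the distinct-type case (Proposition \ref{MonotProp}.(ii) together with Assumption 4), so that $\mathcal V$ is genuinely totally ordered and not just a union of chains within each type.
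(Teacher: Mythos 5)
Your proposal is correct and follows essentially the same route as the paper: both arguments rest on the observation that consecutive elements of a maximal chain in $(\tilde U,\preceq)$ differ in exactly one coordinate, so a maximal chain corresponds to an ordering in which the $N$ coordinates are switched from $u_m$ to $u_M$, giving length $N+1$ and at most $N!$ chains, after which Proposition \ref{MonotProp} makes $\mathcal V$ a chain and hence of size at most $N+1$. The only differences are cosmetic: you count the maximal chains by a direct bijection with permutations where the paper uses induction on $N$, and you are slightly more explicit that the total ordering of $\mathcal V$ across distinct types relies on Proposition \ref{MonotProp}.(ii) together with Assumption 4.
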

\begin{proof} See Appendix \ref{ProofChainLem}.
\end{proof}

For each time step $k$, denote by $\rho_k$ the fraction of players who play $u_M$, that is, $\rho_k=\mu(\{i:u^i_k=u_M\})$. Given any vector $\rho=[\rho_1\dots \rho_N]\in [0,1]^N$, we will show that  there is a unique $\pi\in\Pi$, such that the corresponding actions satisfies the conclusion of Proposition \ref{MonotProp}.(iii) and induces the fractions $\rho$.
An example of the relationship between $\pi$ and $\rho$ is given in Figure \ref{rhoFig}.

\begin{figure}[h!]
\centering
  \includegraphics[width=0.6\textwidth]{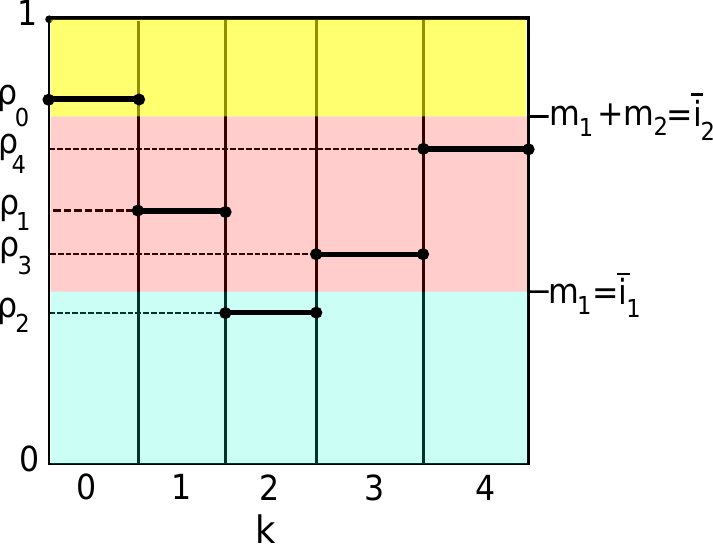}
  \caption{\textit{In this example,  $N=5$ and there are $M=3$ types of players, depicted with different colors.   The mass of players below each solid line play $u_M$ and the mass of players above the line play $u_m$. Example 1 computes $\pi$ from $\rho$.}}
  \label{rhoFig}
\end{figure}


Let us define the following sets:
$$\mathbbm I_k = \{i\in[0,1): u^i_k=u_M\},~~\mathbbm K_k = \{k': \rho_{k'}\geq \rho_k\}, ~~k=0,\dots,N-1.$$
Let  $k_1,\dots,k_N$ be a reordering of $0,\dots,N-1$ such that $\rho_{k_1}\leq \rho_{k_2}\leq \dots\leq\rho_{k_N}$. Consider also the set $\tilde {\mathcal V}=\{\bar v^1,\dots,\bar v^{N+1}\}$ of $N+1$ actions $\bar v^n$ with:
\begin{equation}
\bar v^n_k   =\begin{cases}u_M \text{~~ if~~} k\in \mathbbm K_{k_n}   \\u_m \text{~~ otherwise} \end{cases}, ~~n=1,\dots,N,\label{Actions_}
\end{equation}
and $\bar v^{N+1}_k=u_m$, for all $k$. Observe that $\bar v^{n+1}\preceq \bar v^{n}$. The following proposition shows that the set $\mathcal V$, defined in \eqref{V_def} is subset of the set $\tilde {\mathcal V}$.
\begin{proposition}
\label{Monot_rho}
Assume that $(u_k^i)_{i\in [0,1),k=0,\dots,N-1}$, with $u^i\in\tilde U$, be a set of actions satisfying the conclusions of Proposition \ref{MonotProp}. Then:
\begin{itemize}
\item[(i)] For $k\neq k'$, either $\mathbbm I_k\subset\mathbbm I_{k'}$ or  $\mathbbm I_{k'}\subset\mathbbm I_{k}$.
\item[(ii)] If for some $k,k'$, it holds $\rho_k=\rho_{k'}$ then $\mu$--almost surely all the players have the same action on $k,k'$, i.e., $\mu(\{i:u^i_k=u^i_{k'}\})=1$.
\item[(iii)] Up to subsets of measure zero, the following inclusions hold:
\begin{align*}
& \mathbbm I_{k_1}\subsetsim\mathbbm I_{k_2}\subsetsim\dots\subsetsim\mathbbm I_{k_N},\\
&\mathbbm K_{k_1}\supset\mathbbm K_{k_2}\supset\dots\supset\mathbbm K_{k_N},
\end{align*}
where $\mathbbm I_{k_n}\subsetsim\mathbbm I_{k_{n+1}}$ indicates that $\mu(\mathbbm  I_{k_{n}}\smallsetminus \mathbbm I_{k_{n+1}})=0$.
Furthermore, $\mu(\mathbbm I_{k })=\rho_{k}$.
\item [(iv)]
 For $\mu$--almost all $i\in\mathbbm I_{k_{n+1}}\smallsetminus \mathbbm I_{k_{n}}$ the action $u^i$ is given by $\bar v^{n+1}$,  for $\mu$--almost all   $i\in \mathbbm I_{k_1}$, $u^i =\bar v^1$, and for $\mu$--almost all  $i\in [0,1)\smallsetminus\mathbbm I_{k_N} $, $u^i_k =\bar v^{N+1}$.
\end{itemize}
\end{proposition}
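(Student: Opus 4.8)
The plan is to derive the whole statement from a single structural fact extracted from Proposition~\ref{MonotProp}: the family $\{u^i\}_{i\in[0,1)}$ is a chain for the order $\preceq$. Indeed, for any two players $i_1,i_2$, either $b^{i_1}\neq b^{i_2}$, and then Proposition~\ref{MonotProp}.(ii) makes $u^{i_1},u^{i_2}$ comparable, or $b^{i_1}=b^{i_2}$, and then Proposition~\ref{MonotProp}.(iii) does. I would record this first and use it everywhere. Part~(i) then follows by contradiction: if neither $\mathbbm I_k\subseteq\mathbbm I_{k'}$ nor $\mathbbm I_{k'}\subseteq\mathbbm I_{k}$ holds, I can pick $i_1\in\mathbbm I_k\setminus\mathbbm I_{k'}$ and $i_2\in\mathbbm I_{k'}\setminus\mathbbm I_{k}$, and then $u^{i_1}_k=u_M>u_m=u^{i_2}_k$ while $u^{i_1}_{k'}=u_m<u_M=u^{i_2}_{k'}$, so $u^{i_1}$ and $u^{i_2}$ are $\preceq$-incomparable --- contradicting the chain property.

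For part~(ii), I would use (i) to assume (after possibly swapping $k,k'$) $\mathbbm I_k\subseteq\mathbbm I_{k'}$; then $\rho_k=\mu(\mathbbm I_k)\le\mu(\mathbbm I_{k'})=\rho_{k'}$, and the hypothesis $\rho_k=\rho_{k'}$ forces $\mu(\mathbbm I_{k'}\setminus\mathbbm I_k)=0$. Outside this null set a player is in $\mathbbm I_k$ iff it is in $\mathbbm I_{k'}$, i.e. $u^i_k=u^i_{k'}$, since each of these coordinates takes only the values $u_m,u_M$. Part~(iii) is then bookkeeping: $\mu(\mathbbm I_k)=\rho_k$ is the definition of $\rho_k$; for $n<m$, part~(i) gives one inclusion between $\mathbbm I_{k_n}$ and $\mathbbm I_{k_m}$, and since $\rho_{k_n}\le\rho_{k_m}$ the inclusion $\mathbbm I_{k_m}\subseteq\mathbbm I_{k_n}$ can hold only modulo a null set, so in every case $\mathbbm I_{k_n}\subsetsim\mathbbm I_{k_m}$; finally $\mathbbm K_{k_1}\supseteq\dots\supseteq\mathbbm K_{k_N}$ is immediate from $\mathbbm K_k=\{k':\rho_{k'}\ge\rho_k\}$ together with $\rho_{k_1}\le\dots\le\rho_{k_N}$.

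For part~(iv), I would first note that, by (iii), the sets $\mathbbm I_{k_1}$, the increments $\mathbbm I_{k_{n+1}}\setminus\mathbbm I_{k_n}$ for $n=1,\dots,N-1$, and $[0,1)\setminus\mathbbm I_{k_N}$ partition $[0,1)$ up to null sets, and on each piece the action can be read off coordinate by coordinate. For $\mu$-almost every $i\in\mathbbm I_{k_{n+1}}\setminus\mathbbm I_{k_n}$, (iii) gives $i\in\mathbbm I_{k_m}$ for every $m\ge n+1$ and $i\notin\mathbbm I_{k_m}$ for every $m\le n$, hence $u^i_{k_m}=u_M\iff m\ge n+1$. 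It remains to check that this coincides with $\bar v^{n+1}$: by definition $\bar v^{n+1}_{k_m}=u_M$ iff $k_m\in\mathbbm K_{k_{n+1}}$, i.e. iff $\rho_{k_m}\ge\rho_{k_{n+1}}$, and since the $\rho$'s are sorted this is equivalent to $m\ge n+1$ unless $\rho_{k_n}=\rho_{k_{n+1}}$, in which case $\mathbbm I_{k_{n+1}}\setminus\mathbbm I_{k_n}$ is itself null by (iii) and nothing needs checking. The two endpoint cases are the same computation: $\rho_{k_1}$ minimal gives $\mathbbm K_{k_1}=\{0,\dots,N-1\}$, hence $\bar v^1\equiv u_M$, which matches $u^i_{k_m}=u_M$ for all $m$ when $i\in\mathbbm I_{k_1}$; and for $i\in[0,1)\setminus\mathbbm I_{k_N}$ one gets $u^i\equiv u_m=\bar v^{N+1}$. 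The inclusion $\mathcal V\subseteq\tilde{\mathcal V}$ announced just before the proposition then falls out, since this partition uses exactly the $N+1$ actions $\bar v^1,\dots,\bar v^{N+1}$.

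The genuinely delicate point --- and the one I expect to be the main obstacle --- is the handling of ties among the $\rho_k$: the reordering $k_1,\dots,k_N$ is not unique, and the coordinatewise identification in~(iv) is unambiguous only because, by parts (ii)/(iii), a block of equal $\rho$'s produces only null increment sets. For this reason I would prove (i) and (ii) carefully before turning to (iv); the rest is routine measure-theoretic bookkeeping.
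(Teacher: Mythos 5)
Your proposal is correct and follows essentially the same route as the paper's proof: part (i) by the same incomparability contradiction drawn from Proposition \ref{MonotProp}, part (ii) by the same measure computation, and parts (iii)--(iv) by the same case analysis on whether consecutive $\rho_{k_n}$ are equal or strictly ordered. Your explicit attention to ties among the $\rho_k$ and to invoking Proposition \ref{MonotProp}.(ii) as well as (iii) for players of different types is slightly more careful than the paper's write-up, but it is the same argument.
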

\begin{proof} See Appendix \ref{Monot_rho_proof}.
\end{proof}

\begin{corollary}
\label{Corollary_1}
The mass of players of type $j$ with action $\bar v^n$ is given by:
 \begin{align}
 \mu(i: i \text{ is of type } j, u^i=\bar v^n) = \mu([\bar i_{j-1},\bar i_j)\cap[ \rho_{k_{n-1}},\rho_{k_{n}})),
 \end{align}
 where we use the convention that $\rho_{k_{0}}=0$, and $\rho_{k_{N+1}}=1$. Thus:
 \begin{equation}
 \label{compute_pi}
 \pi_{{(j-1)2^N+l}}=\begin{cases}  \mu([\bar i_{j-1},\bar i_j)\cap[ \rho_{k_{n-1}},\rho_{k_{n}})) \text{~~if~~} v^l = \bar v^n\\ 0~~~~~~~~~~~~~~~~~~~~~~~~~~~~~~\text{otherwise}  \end{cases}
 \end{equation}
\end{corollary}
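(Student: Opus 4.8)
The plan is to establish Corollary \ref{Corollary_1} as a bookkeeping consequence of the structural description already obtained in Proposition \ref{Monot_rho}. The heart of the matter is the labelling scheme: by Proposition \ref{MonotProp}.(ii), within a type the actions are linearly ordered by vulnerability parameter $b_j$, and by Assumption 4 the types themselves are ordered $b_1<\dots<b_M$; since the players are indexed so that the type-$j$ players occupy $[\bar i_{j-1},\bar i_j)$, more vulnerable means "larger index" precisely when we arrange the index axis so that higher indices socialize less. Combined with Proposition \ref{Monot_rho}.(iii), which gives $\mu(\mathbbm I_{k_n}) = \rho_{k_n}$ and the nesting $\mathbbm I_{k_1}\subsetsim\dots\subsetsim\mathbbm I_{k_N}$, one concludes that (up to null sets) $\mathbbm I_{k_n}$ is the interval $[0,\rho_{k_n})$ — the players who socialize most — and hence the "layer" of players assigned action $\bar v^n$ is exactly $\mathbbm I_{k_{n}}\smallsetminus\mathbbm I_{k_{n-1}}$, which is the index interval $[\rho_{k_{n-1}},\rho_{k_n})$, using the stated conventions $\rho_{k_0}=0$, $\rho_{k_{N+1}}=1$.

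First I would fix the identification $\mathbbm I_{k_n}\sim[0,\rho_{k_n})$: Proposition \ref{Monot_rho}.(iv) already tells us that $i\in\mathbbm I_{k_{n+1}}\smallsetminus\mathbbm I_{k_n}$ implies $u^i=\bar v^{n+1}$, $i\in\mathbbm I_{k_1}$ implies $u^i=\bar v^1$, and $i\notin\mathbbm I_{k_N}$ implies $u^i=\bar v^{N+1}$, all $\mu$-a.e.; so the only remaining task is to locate these sets as subintervals of $[0,1)$. This is where monotonicity in the index is used: a player of lower index has weakly smaller $b$, hence (Proposition \ref{MonotProp}) weakly larger action at every time step, hence if a higher-index player plays $u_M$ at step $k$ so does the lower-index one — i.e. $\mathbbm I_k$ is downward closed in the index order, so it is (a.e.) an interval $[0,c_k)$, and $c_k=\mu(\mathbbm I_k)=\rho_k$ by part (iii). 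Then step $n$ of the chain corresponds to index band $[\rho_{k_{n-1}},\rho_{k_n})$, and its intersection with the type-$j$ band $[\bar i_{j-1},\bar i_j)$ gives exactly the players of type $j$ playing $\bar v^n$.

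From there the first displayed formula is immediate, and the second display \eqref{compute_pi} is just the translation of that statement into the coordinates of $\pi$: by definition $\pi_{(j-1)2^N+l}$ is the mass of type-$j$ players using $v^l$; Proposition \ref{Monot_rho} says $\mathcal V\subset\tilde{\mathcal V}$, so this mass is zero unless $v^l=\bar v^n$ for some $n$, in which case it equals $\mu([\bar i_{j-1},\bar i_j)\cap[\rho_{k_{n-1}},\rho_{k_n}))$. One should note the mild degenerate cases — when some $\rho_k=\rho_{k'}$ the bands $[\rho_{k_{n-1}},\rho_{k_n})$ can be empty, which is consistent since Proposition \ref{Monot_rho}.(ii) forces all players to share the same action on such coincident steps, so no new element of $\mathcal V$ appears; the formula handles this automatically because the empty intersection has measure zero.

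The main obstacle, such as it is, is purely a matter of being careful with the index-monotonicity step: Proposition \ref{MonotProp} is stated for pairs of players and needs to be applied consistently across all of $[0,1)$ to conclude that each $\mathbbm I_k$ is (a.e.) an initial segment, and one must check that the reordering $k_1,\dots,k_N$ interacts correctly with the nesting so that the $n$-th layer is $[\rho_{k_{n-1}},\rho_{k_n})$ rather than some permutation of it. Once the identification $\mathbbm I_{k_n}\sim[0,\rho_{k_n})$ is pinned down, everything else is definitional. Hence I would keep the proof short: cite Proposition \ref{Monot_rho}.(iii)–(iv) for the layer structure, invoke Proposition \ref{MonotProp} and the type ordering for the initial-segment claim, and then simply read off \eqref{compute_pi}.
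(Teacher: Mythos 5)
Your argument is correct and takes essentially the same route as the paper: the published proof is a one-line citation of Proposition \ref{Monot_rho} and Proposition \ref{MonotProp}.(ii), which are exactly the ingredients you combine (layer structure plus type-ordering to locate each layer as an index band and intersect it with $[\bar i_{j-1},\bar i_j)$). The only cosmetic caveat is that identifying $\mathbbm I_{k_n}$ with the initial segment $[0,\rho_{k_n})$ requires a measure-preserving relabelling within each type, since Proposition \ref{MonotProp}.(iii) orders same-type actions but not by index; this is harmless here because the corollary concerns masses only.
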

\begin{proof} The proof follows dirrectly from of Proposition  \ref{Monot_rho} and Proposition \ref{MonotProp}.(ii).
\end{proof}
\begin{remark}
There are at most $M+N+1$ combinations of $j,l$ such that $\pi_{{(j-1)2^N+l}}>0.$
\end{remark}

Let us denote by $\tilde \pi (\rho)$ the value of vector $\pi$ computed by \eqref{compute_pi}.

\begin{example}
As an example, we compute the vector $\pi=\tilde \pi(\rho)$ for the vector of fractions $\rho$ of Figure \ref{rhoFig}.
Using Corollary \ref{Corollary_1}, we find that the possible actions are $\bar v^0 = [u_M,u_M,u_M,u_M,u_M]$, $\bar v^1 = [u_M,u_M,u_m,u_M,u_M]$,  $\bar v^2 = [u_M,u_M,u_m,u_m,u_M]$,  $\bar v^3 = [u_M,u_m,u_m,u_m,u_M]$,  $\bar v^4 = [u_M,u_m,u_m,u_m,u_m]$,  $\bar v^5 = [u_m,u_m,u_m,u_m,u_m]$. The mass of the players of each type following each action is described in the following table.  \begin{center}
 \small
\begin{tabular}{ c||c|c|c|c|c|c|c|c }
Type &1 &  1 &  2&2&2&2&3&3 \\  \hline
Mass&$\rho_2$&$\bar i_1-\rho_2$&$\rho_3-m_1$&$\rho_1-\rho_3$&$\rho_4-\rho_1$&  $\bar i_2-\rho_4$&$\rho_0-\bar i_2 $&$1-\rho_0$\\
 \hline
Action&$\bar v^0$&$\bar v^1$&$\bar v^1$&$\bar v^2$&$\bar v^3$&$\bar v^4$&$\bar v^4$&$\bar v^5$
\end{tabular}
\end{center}
\textit{Type 1 corresponds to blue, type 2 corresponds to pink and type 3 to yellow in Figure   \ref{rhoFig} }
\end{example}

\begin{proposition}
\label{Reduced}
The fractions $\rho_0,\dots,\rho_{N-1}$ correspond to a Nash equilibrium if and only if:
\begin{equation}
\label{Redu_Eq}
H(\rho)=\sum_{j=1}^M\sum_{n=1}^{N+1} \mu([\bar i_{j-1},\bar i_j)\cap[ \rho_{k_{n-1}},\rho_{k_{n}})) (\bar F_{j,\bar v^n}(\tilde\pi(\rho))-\delta^j(\tilde\pi(\rho)) )=0,
\end{equation}
where $\bar F_{j,\bar v^n}(\pi) $ is the cost of action $\bar v^n$, for a player of type $j$.
Furthermore, $H(\pi)$ is continuous and non-negative.
\end{proposition}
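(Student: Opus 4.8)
The plan is to show the equivalence "ρ corresponds to a Nash equilibrium $\iff H(\rho)=0$" by combining the NCP characterization of Proposition \ref{NCP_Prop}.(i) with the structural description of equilibrium action profiles from Proposition \ref{Monot_rho} and Corollary \ref{Corollary_1}. First I would establish the continuity and non-negativity of $H$, since these are needed for the ``if'' direction and are comparatively routine. Non-negativity: for \emph{any} admissible $\rho\in[0,1]^N$, the vector $\tilde\pi(\rho)\in\Pi$ is a well-defined distribution of actions (the masses in \eqref{compute_pi} are non-negative and sum to $m_j$ within each type, as $\rho_{k_0}=0,\rho_{k_{N+1}}=1$). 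Each summand in \eqref{Redu_Eq} is a product of a non-negative mass with the term $\bar F_{j,\bar v^n}(\tilde\pi(\rho))-\delta^j(\tilde\pi(\rho))$, which is non-negative because $\delta^j(\pi)$ is by definition the \emph{minimum} auxiliary cost of a type-$j$ player over all actions in $U$ (hence over $\tilde U\supset\tilde{\mathcal V}$) against the aggregate quantities induced by $\tilde\pi(\rho)$. Therefore $H(\rho)\ge 0$ always. Continuity: $\rho\mapsto\tilde\pi(\rho)$ is continuous (the measures $\mu([\bar i_{j-1},\bar i_j)\cap[\rho_{k_{n-1}},\rho_{k_n}))$ depend continuously on the sorted $\rho$-coordinates, and sorting is continuous), the maps $\pi\mapsto I^f_k$ and $\pi\mapsto\bar u_k$ are affine hence continuous, $\bar F_{j,\bar v^n}$ is a continuous (indeed smooth) function of these aggregates via \eqref{AuxilCost}, and $\delta^j(\pi)=\min$ over finitely many continuous functions is continuous; I should note that although the assignment of which $\bar v^n$ gets which mass can reshuffle as two $\rho$-coordinates cross, the value $H(\rho)$ is unaffected by such relabeling, so $H$ is genuinely continuous on all of $[0,1]^N$.

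Next, the ``only if'' direction. Suppose $\rho$ corresponds to a Nash equilibrium, i.e.\ there is a Nash equilibrium distribution $\pi^\star$ inducing fractions $\rho_k=\mu(\{i:u^i_k=u_M\})$. By Proposition \ref{MonotProp}.(iii) the support of $\pi^\star$ is a chain, so the hypotheses of Proposition \ref{Monot_rho} hold, and by Corollary \ref{Corollary_1} we get $\pi^\star=\tilde\pi(\rho)$. Since $\pi^\star$ is a Nash equilibrium, the NCP \eqref{NCP_large} gives $\sum_{j,l}\pi^\star_{(j-1)2^N+l}\bigl(F_{(j-1)2^N+l}(\pi^\star)-\delta^j(\pi^\star)\bigr)=0$; restricting to the finitely many $(j,l)$ with $v^l=\bar v^n\in\tilde{\mathcal V}$ (all others carrying zero mass by Corollary \ref{Corollary_1}) and rewriting $F_{(j-1)2^N+l}=\bar F_{j,\bar v^n}$ and $\pi^\star_{(j-1)2^N+l}=\mu([\bar i_{j-1},\bar i_j)\cap[\rho_{k_{n-1}},\rho_{k_n}))$ yields exactly $H(\rho)=0$.

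For the ``if'' direction, assume $H(\rho)=0$. Set $\pi:=\tilde\pi(\rho)\in\Pi$. Every summand of \eqref{Redu_Eq} is $\ge 0$ by the argument above, so each one vanishes: whenever $\mu([\bar i_{j-1},\bar i_j)\cap[\rho_{k_{n-1}},\rho_{k_n}))>0$ we must have $\bar F_{j,\bar v^n}(\pi)=\delta^j(\pi)$, i.e.\ action $\bar v^n$ is optimal for type $j$ against $\pi$. This says precisely that whenever $\pi_{(j-1)2^N+l}>0$ (which by \eqref{compute_pi} forces $v^l=\bar v^n$ with positive intersection mass) the action $v^l$ attains $\min_{l'}F_{(j-1)2^N+l'}(\pi)=\delta^j(\pi)$ — here I use that $\delta^j(\pi)$ equals the minimum of $F_{(j-1)2^N+l'}(\pi)$ over \emph{all} $l'$, which is immediate from the definition of $\delta^j$ together with Proposition \ref{EquProp}.(i) ensuring the optimum is attained on $\tilde U$. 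Hence $\pi$ satisfies the Nash condition of the Definition, so by Proposition \ref{NCP_Prop}.(i) it is a Nash equilibrium, and it induces the fractions $\rho$ by construction.

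The main obstacle I anticipate is not any single estimate but the bookkeeping around the relabeling of actions when coordinates of $\rho$ coincide or cross: one must check that $\tilde\pi(\rho)$ is well-defined independently of the choice of sorting permutation $k_1,\dots,k_N$ (Proposition \ref{Monot_rho}.(ii) is the key input here, guaranteeing that tied time-steps carry $\mu$-a.s.\ identical actions, so the ambiguous $\bar v^n$'s all receive zero mass), and that consequently both $H(\rho)$ and the induced distribution are unambiguous. Once that is pinned down, continuity of $H$ and the two implications follow cleanly from Propositions \ref{EquProp}, \ref{MonotProp}, \ref{NCP_Prop} and Corollary \ref{Corollary_1}.
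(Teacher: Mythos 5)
Your proposal is correct and follows essentially the same route as the paper's own proof: both directions rest on writing $H(\rho)=\tilde\pi(\rho)^{T}\Phi(\tilde\pi(\rho))$, using termwise non-negativity of the summands (since $\delta^{j}$ is the minimum cost) for the ``if'' direction, the NCP characterization of Proposition \ref{NCP_Prop}.(i) together with Corollary \ref{Corollary_1} for the ``only if'' direction, and Lemma \ref{l1} plus continuity of $\rho\mapsto\tilde\pi(\rho)$ for continuity of $H$. Your additional care about the well-definedness of $\tilde\pi(\rho)$ under ties and reorderings of the $\rho$-coordinates is a legitimate refinement of a point the paper leaves implicit, not a departure from its argument.
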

\begin{proof} See Appendix \ref{Reduced_proof}.
\end{proof}
\begin{remark}
The computation of an equilibrium has been reduced to the calculation of the minimum of an $N-$dimensional function. We exploit this fact in the following section to proceed with the numerical studies.
\end{remark}


\section{Numerical Examples}
\label{Sec_Num}

In this section, we give some numerical examples of Nash equilibria computation. Subsection \ref{Single_type_Num_sect} presents an example with a single type of players and \ref{Many_type_Num_sect}  an example with many types of players. Subsection  \ref{Eff_of_u_M} studies the effect the maximum allowed action $u_M$ on the strategies and the costs of the players\footnote{Data availability: The datasets generated during and analysed during the current study are available from the corresponding author on reasonable request.}. 

\subsection{Single type of players}
\label{Single_type_Num_sect}
In this subsection, we study the symmetric case, i.e., all the players have the same parameter $b^i$. The parameters for the dynamics are $r=0.4$ and $a=1/6$ which correspond to an epidemic with basic reproduction number $R_0=2.4$, where an infected person remains infectious for an average of $6$ days (these parameters are similar with \cite{kabir2020evolutionary} which analyzes COVID--19 epidemic). We assume that $u_m=0.4$ and that there is a maximum action $u_M=0.75,$ set by the government. The discretization time intervals are $1$ week and the time horizon $T$ is approximately 3 months (13 weeks). The initially infected players are $I_0=0.01$. We chose this time horizon to model a wave of the epidemic, starting at a time point where $1\%$ of the population is infected. We assume that  $\kappa=3$. 

We then compute the Nash equilibrium using a multi start local search method for \eqref{Redu_Eq}.
Figure \ref{rho_many_b} shows the fractions $\rho$, for several values of $b$, and Figure \ref{tot_infected} presents the evolution of the total mass of infected players for the same values of $b$. We observe that, for small values of $b$, which correspond to less vulnerable or very sociable agents, the players do not engage in voluntary social distancing. For intermediate values of $b$ the players engage voluntary social distancing, especially when there is a large epidemic  prevalence. For large values of $b$,  there is an initial reaction of the players which reduces the number of infected people. Then the actions of the players return to intermediate levels and keep the number of infected people moderate. In all the cases, voluntary social distancing `flattens the curve' of infected people mass.  

\begin{figure}[h!]
\centering
  \includegraphics[width=\textwidth]{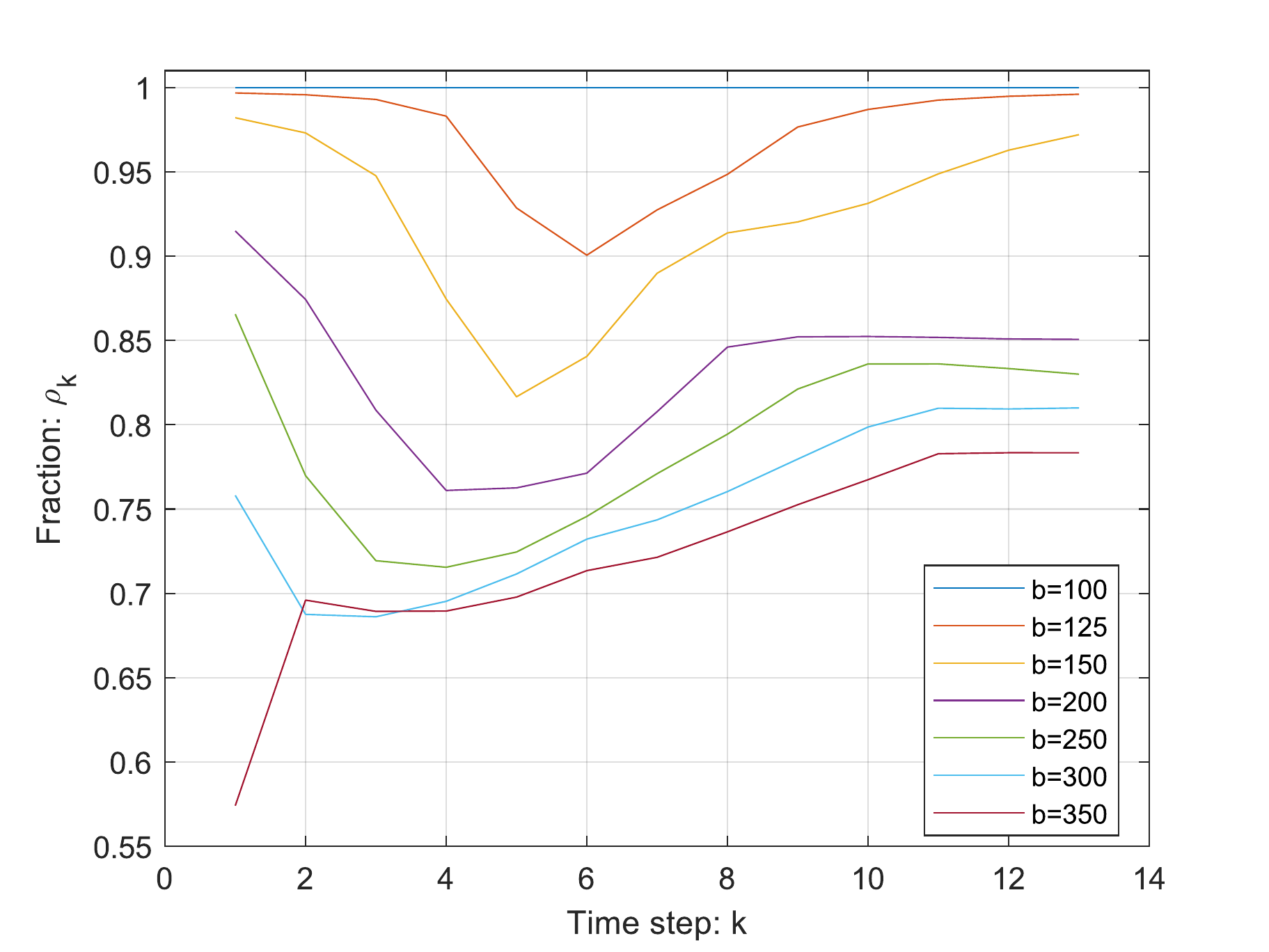}
  \caption{\textit{The fractions $\rho_k$, for $k=1,\dots,13$, for $b=100,125,150,200,250,300,350$.}}
  \label{rho_many_b}
\end{figure}

\begin{figure}[h!]
\centering
  \includegraphics[width=\textwidth]{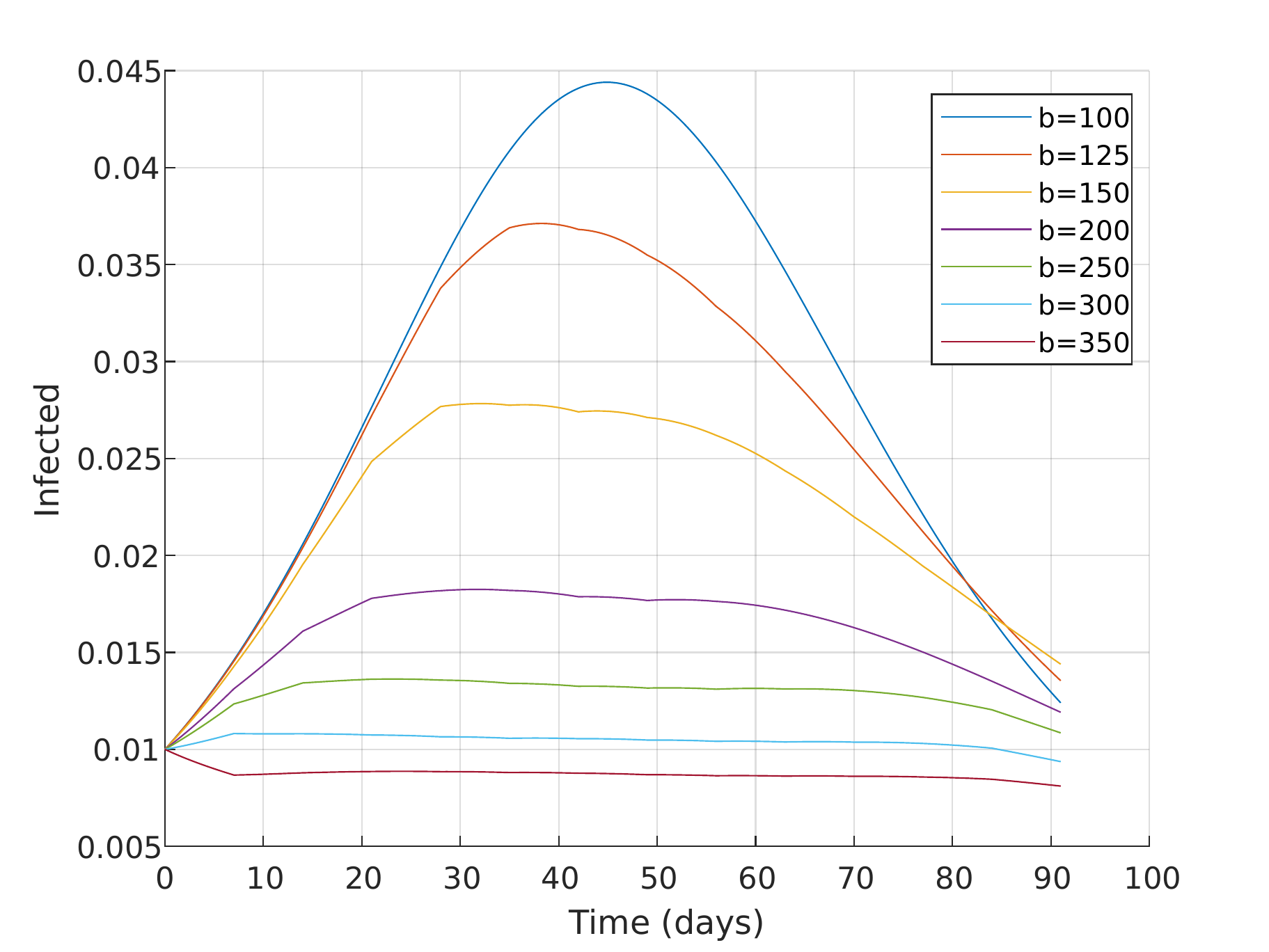}
  \caption{\textit{The evolution of the number of infected people under the computed Nash equilibrium, for $b=100,125,150,200,250,300,350$.}}
  \label{tot_infected}
\end{figure}

We then present some results for the case where $b^i=200$.  Figure \ref{SI_evol_fig} illustrates the evolution of $S^i(t)$ and $I^i(t)$, for the players belonging to different intervals $(\rho_k,\rho_{k+1})$ and thus following different strategies. We observe that the trajectories of $S^i$'s do not intersect. What is probably interesting is that the trajectories of $I^i$ may intersect. This indicates that, towards the end of the time horizon, it is probable for a person who was less cautious, i.e., she used higher values of $u^i$, to have a lower probability of being infectious.
\begin{figure}[h!]
\centering
  \includegraphics[width=\textwidth]{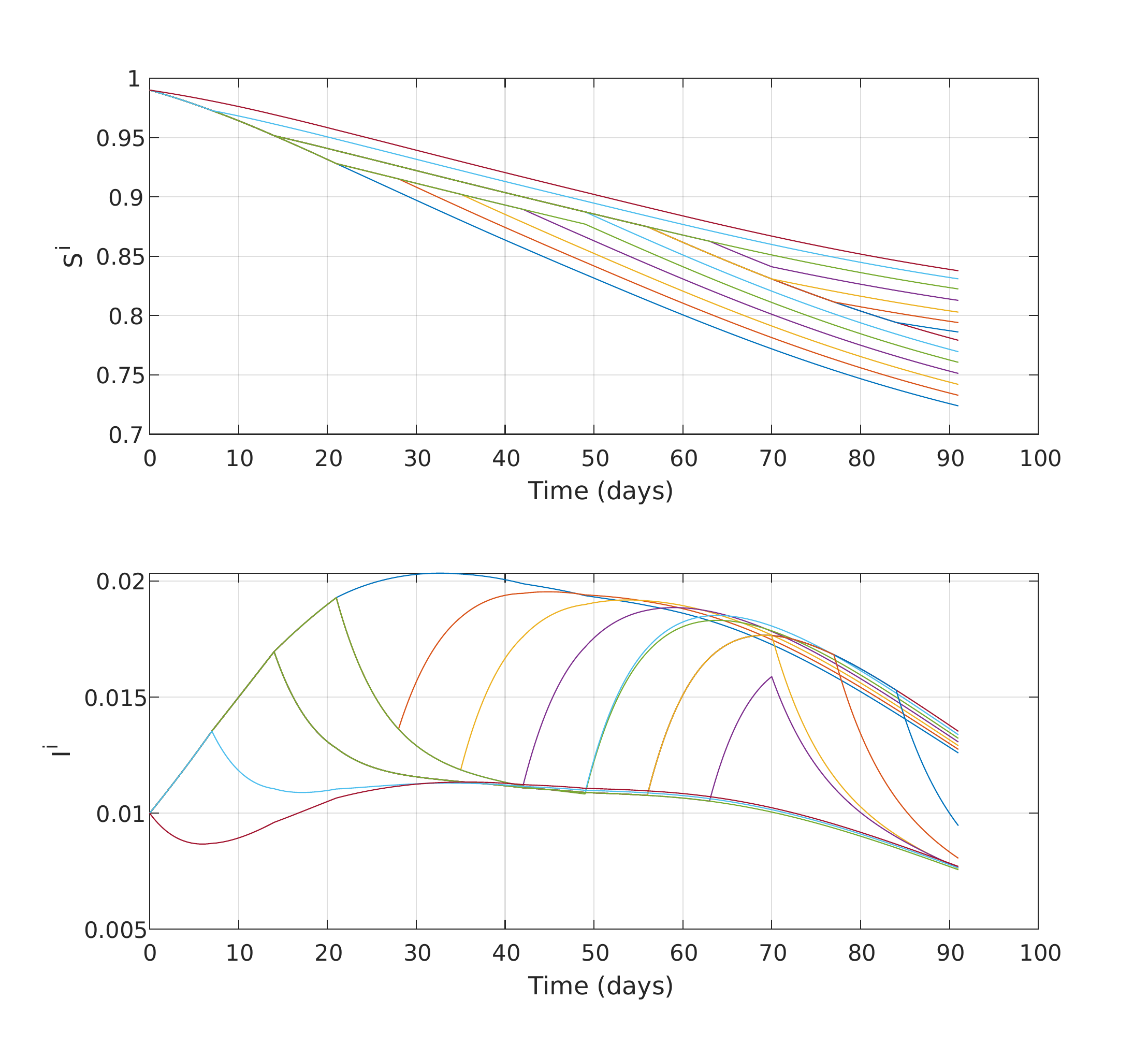}
  \caption{\textit{The evolution of the probabilities $S^i$ and $I^i$, for players following different strategies, for $b=200$.}}
  \label{SI_evol_fig}
\end{figure}

\subsection{Many Types of Players}
\label{Many_type_Num_sect}
We then compute the Nash equilibrium for the case of multiple types of players. 
We assume that there are six types of players with vulnerability parameters $G^1=100,~ G^2=200,~ G^3=400,~ G^4=800,~ G^5=1600,~  G^6=3200$. The sociability parameter $s^i$ is equal to $1$, for all the players.  The masses of these types are $m_1=0.5$ and $m_2=\dots=m_6=0.1$.  The initial condition is for all the players $I_0=0.0001$ and the time horizon is 52 weeks (approximately a year). Here we assume that the maximum action is $u_M=0.8$. The rest of the parameters are as in the previous subsection. 

Figure \ref{fractions_many_fig} shows the fractions $\rho$ and Figure \ref{many_types_dynamics} presents the evolution of the probability of each category of players to be susceptible and infected. Let us note that the analysis of Subsection  \ref{ReductionSubsec} simplifies a lot the analysis. Particularly, the set $\Pi$ has $(2^{52}-1)^6\simeq 8.3\cdot 10^{93}$ dimensions, while Problem \eqref{Redu_Eq} only 52. 

\begin{figure}[h!]
\centering
  \includegraphics[width=\textwidth]{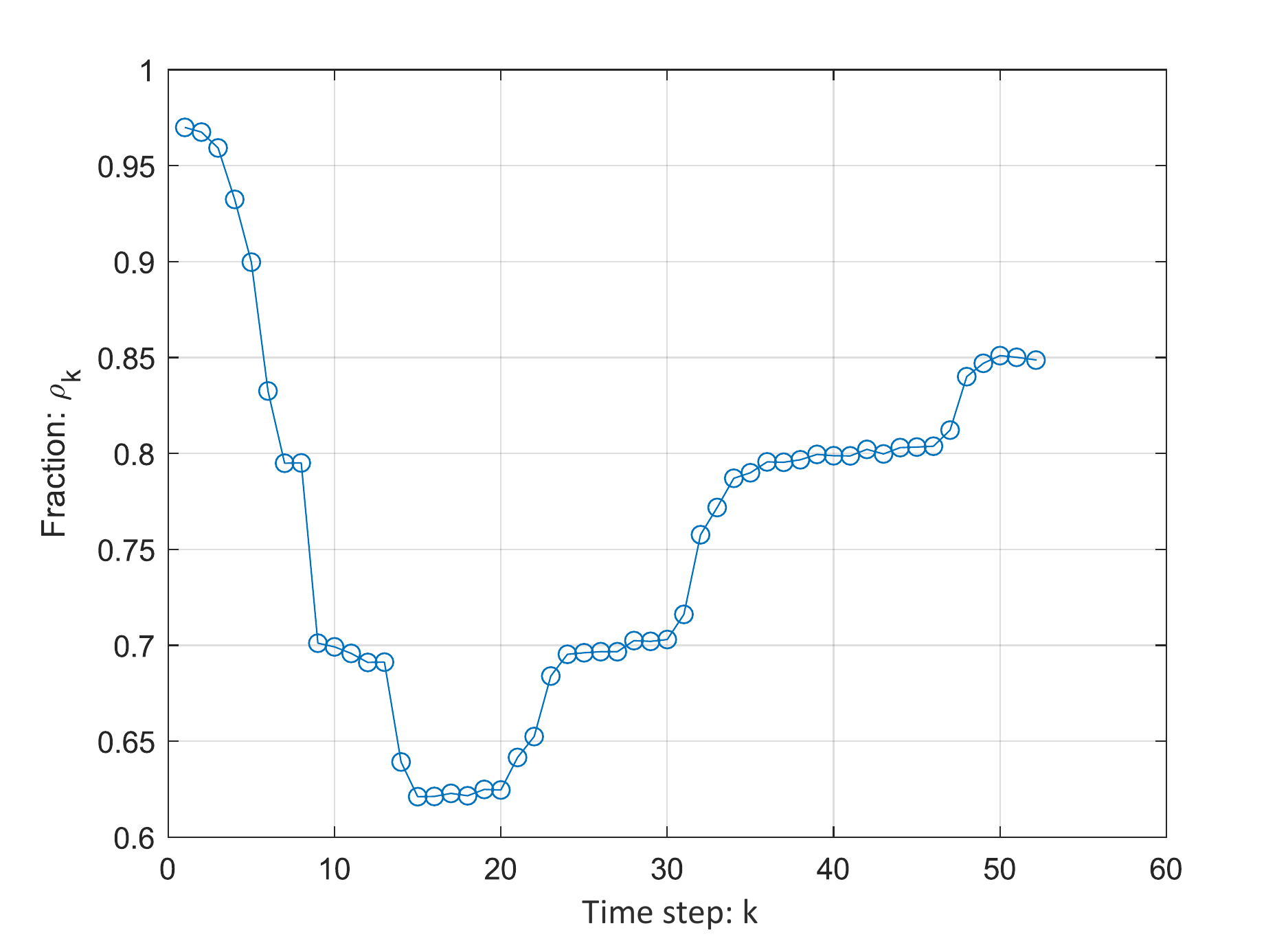}
  \caption{\textit{The fractions $\rho_k$}}
  \label{fractions_many_fig}
\end{figure}
\begin{figure}[h!]
\centering
  \includegraphics[width=\textwidth]{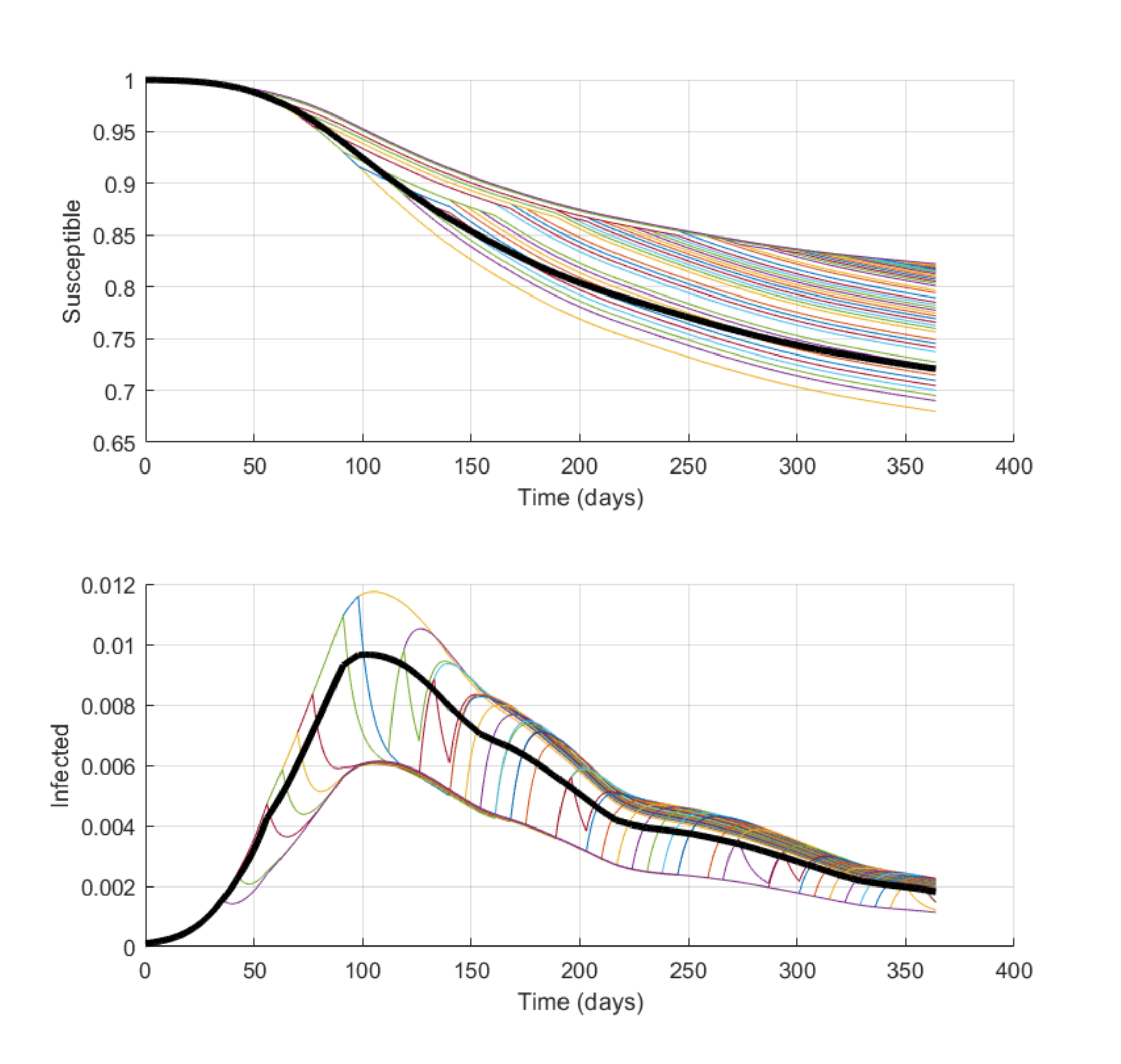}
  \caption{\textit{The probability of a class of people to be susceptible and infected. The colored lines correspond to the probabilities of being susceptible $S^i(t)$ and infected $I^i(t)$, for the several classes of players. The bold black line represents the mass of susceptible and infected persons}}
  \label{many_types_dynamics}
\end{figure}

\subsection{Effect of $u_M$}
\label{Eff_of_u_M}

We then analyze the case where the types of the players are as in subsection \ref{Many_type_Num_sect}, the initial condition is $I_0=0.005$ for all the players, for various values of $u_M$. The time horizon is 13 weeks.

Figure \ref{rho_many_u_M} illustrates the equilibrium fractions $\rho_k$, for the various values of $u_M$. We observe that as $u_M$ increases, the fractions $\rho_k$ decrease. Figure \ref{many_types_INFECTED} shows the evolution of the mass of infected players, for the different values of the maximum action $u_M$. We observe that,  as $u_M$ increases, the mass of infected players decreases. Figure \ref{many_types_costs} presents the cost of the several types of players, for the different value of the maximum action $u_M$. We observe that players with low vulnerability ($G=100$) prefer always a larger value of $u_M$, which corresponds to less stringent restrictions. For vulnerable players (e.g. $G=3200$) the cost is an increasing function of $u_M$, that is they prefer more stringent restrictions. For intermediate values of $G$, the players prefer intermediate values of $u_M$. The mean cost in this example is minimized for $u_M=0.6$.

\begin{figure}[h!]
\centering
  \includegraphics[width=\textwidth]{ 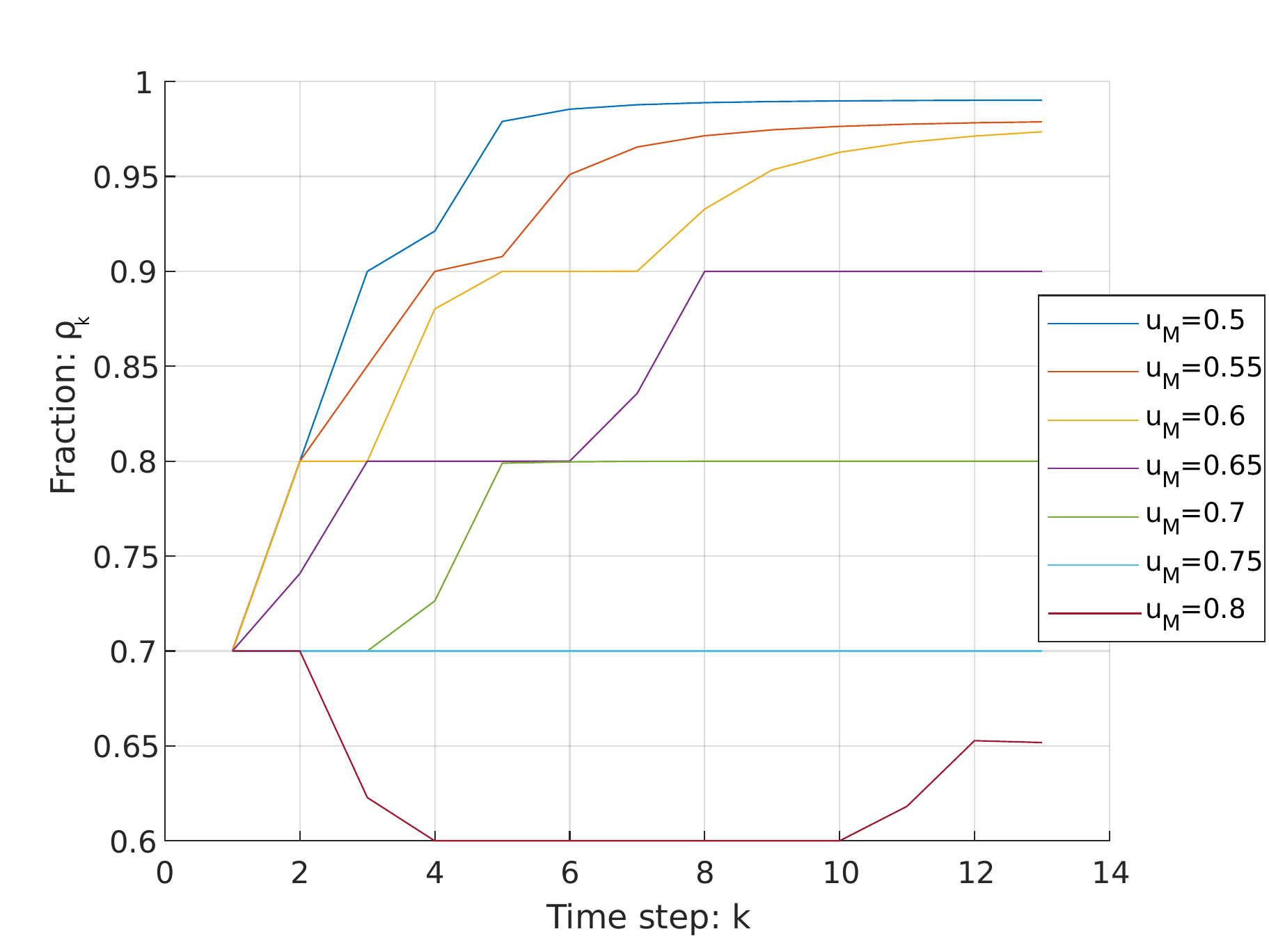}
  \caption{\textit{ The fractions $\rho_k$, for the several values of the maximum action $u_M$.}}
  \label{rho_many_u_M}
\end{figure}

\begin{figure}[h!]
\centering
  \includegraphics[width=\textwidth]{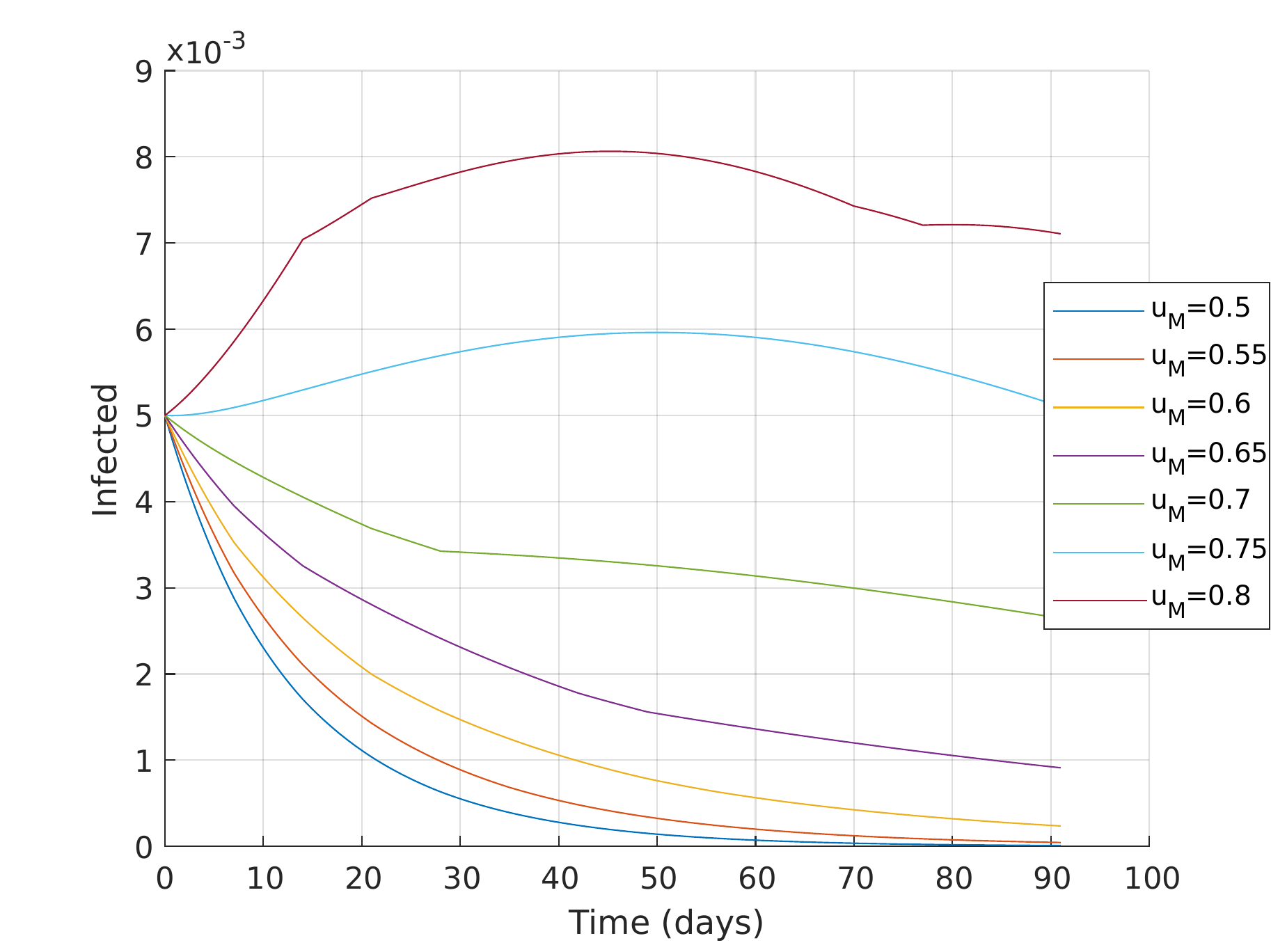}
  \caption{\textit{  The mass of infected people as a function of time, for the different values of the maximum action $u_M$.}}
  \label{many_types_INFECTED}
\end{figure}
 
\begin{figure}[h!]
\centering
  \includegraphics[width=\textwidth]{ 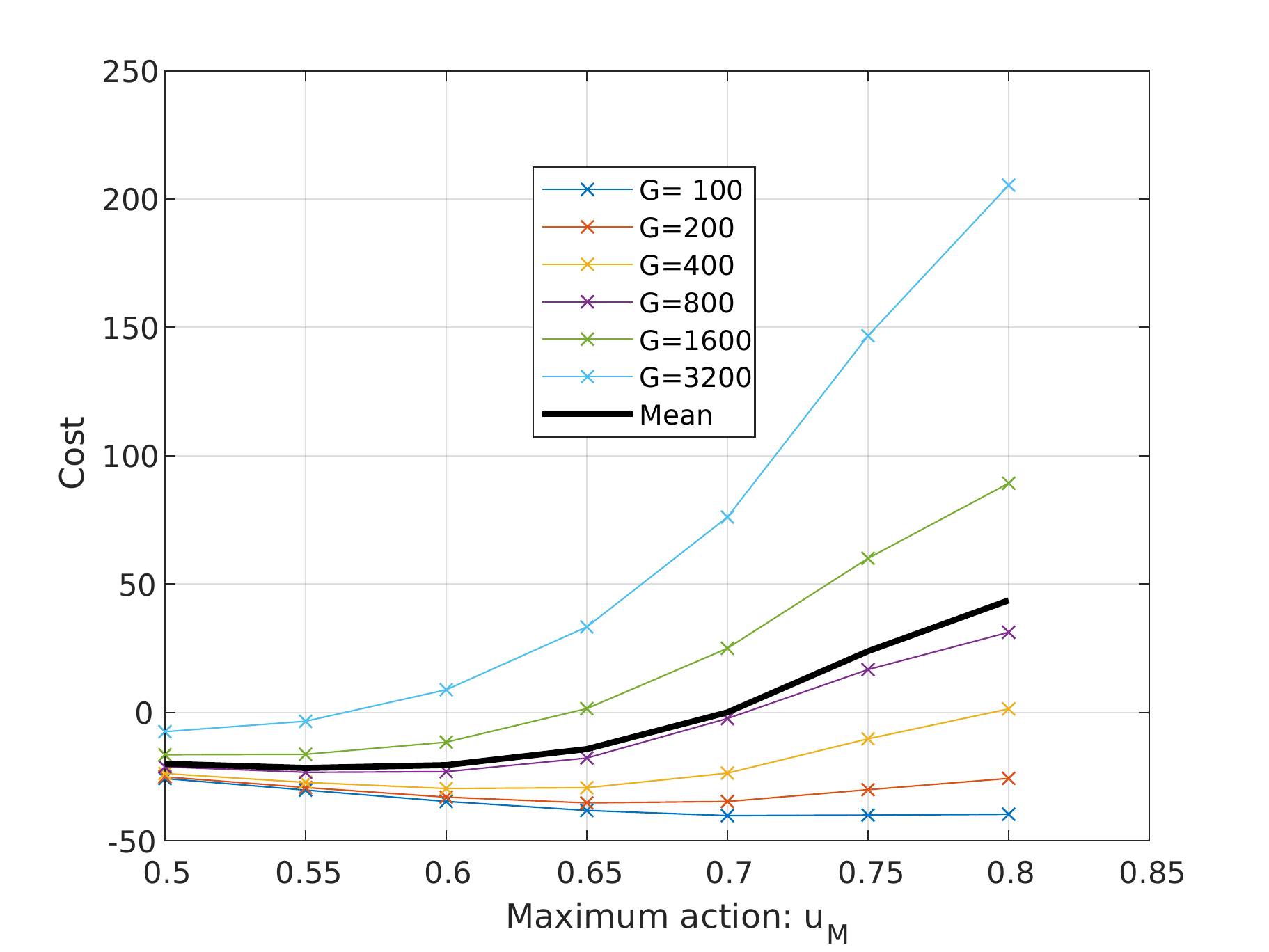}
  \caption{\textit{ The cost for the several classes of players, for the different values of the maximum action $u_M$. The bold black line represents the mean cost of all the players. }}
  \label{many_types_costs}
\end{figure}

\section{Conclusion}

This paper studied a dynamic game of social distancing during an epidemic, giving an emphasis on the analysis of asymmetric solutions. We proved the existence of a Nash equilibrium and derived some monotonicity properties of the agents' strategies. The monotonicity result was then used to derive a reduced--order characterization of the Nash equilibrium, simplifying its computation significantly. Through numerical experiments, we show that both the agents' strategies and the evolution of the epidemic depend strongly on the agents' parameters (vulnerability, sociality) and the epidemic's initial spread. Furthermore, we observed that agents with the same parameters could have different behaviors, leading to rich, high--dimensional dynamics. We also observe that more stringent constraints on the maximum action (set by the government) benefit the more vulnerable players at the expense of the less vulnerable. Furthermore, there is a certain value for the maximum action constant that minimizes the average cost of the players.

There are several directions for future work. First, we can study more general epidemics models than the SIR. Second, we can investigate different information patterns, including the cases where the agents receive regular or random information about their health state. Finally, we can compare the behaviors computed analytically with real-world data.

\appendix
\section{Appendix: Proof of the Results of the Main Text}
\subsection{Existence of Solution to \eqref{individual_dynamics}}
\label{ProofExistenceDE}

For any $i\in[0,1)$, if   $[S^i(t),~I^i(t)]$ solve the differential equations \eqref{individual_dynamics}, with initial condition $(S^i(0),I^i(0))\in[0,1]^2$, then $(S^i(t),I^i(t))$  remain in $[0,1]^2$. Thus, we consider the solution of the differential equations:
\begin{equation}\label{individual_dynamics_sat}
\begin{aligned}
\dot {S}^i &= \text{sat}_B(-ru^iS^iI^f(t))\\
\dot {I}^i &=\text{sat}_B(ru^iS^iI^f(t)-\alpha^i I^i)
\end{aligned},
\end{equation}
where $\text{sat}_B(z) = \max( \min(z,B),-B)$, and $B=ru_M^2+\max_j\alpha_j$.

 Consider the Banach space $X=L^1([0,1),\mathbb R^2)$, and let $x_0=(S^\cdot(0),I^\cdot(0)):[0,1)\rightarrow \mathbb R^2$. Then, under Assumptions 1,3,  it holds $x_0\in X$. For each interval $[t_k,t_{k+1})$, the differential equations \eqref{individual_dynamics_sat} with the corresponding initial conditions  can be written as:
\begin{equation}
\label{individual_dynamics_sat_compact}
\dot x=f_k(x),~~x(t_k)=x^k_0,
\end{equation}
 where for $x:i\mapsto [S^i,~I^i]^T$, the value of  $f_k(x)\in X$ is given by: $$f_k(x):i\mapsto [\text{sat}_B(-ru^i_k S^i\mathcal M_k x), ~\text{sat}_B(ru^i_k S^i\mathcal M_k x-\alpha^i I^i)]^T,$$ where $\mathcal M_k :X\rightarrow \mathbb R$ is a linear bounded operator with $\mathcal M_k x = \int I^i u_k^i \mu(di)$. For the  initial condition it holds $x_0^0=x_0$, and $x_0^k=x(t_k)$  is computed from the solution of \eqref{individual_dynamics_sat_compact} on the interval $[t_{k-1},t_k)$, for $k\geq 1$. 
 For all $k$, $f_k$ is Lipschitz and thus there is a unique solution to \eqref{individual_dynamics_sat_compact} (e.g. Theorem 7.3 of \cite{Brezis}).
 Furthermore, both $I^\cdot(t)$ and $u^\cdot(t)$ are measurable and bounded. Thus, the integrals in  \eqref{u_tilde}, \eqref{I_F_def} are well-defined.

Note that from Assumption 1, we only used the fact that $S^\cdot(0),I^\cdot(0):[0,1)\rightarrow \mathbb R$ are measurable and not the piecewise constant property.

\subsection{Proof of Proposition \ref{EquProp}}
\label{ProofEquProp}
 (i)
  Since, $rI^f_k>0$ and $b^i>0$,   the cost \eqref{AuxilCost} is strictly concave, with respect to $u_k^i$. Thus, the minimum with respect to $u_k^i$ is either $u_m$ or $u_M$.
  \newline
  (ii) Since $U$ is compact and $\tilde J$ is continuous, there is an optimal solution for \eqref{aux_opt_pr}. Denote by $u^{i,\star}$ this solution. Further, denote by $V_1=\tilde J^i(u^{i,\star})$, and $V_2 = \inf_A \{-b^ie^{-A}+f(A)\}$  the values of problems \eqref{aux_opt_pr} and \eqref{Probl2} respectively. Then, for $A^\star=\sum_{k=0}^{N-1}ru_k^{i,\star}I_k^f$, we have $$V_2\leq -b^ie^{-A^\star}+f(A^\star)\leq -b^i\exp\left[-r\sum_{k=0}^{N-1}u^{i,\star}_kI^f_k\right]-\sum_{k=0}^{N-1}u^{i,\star}_k\bar{u}_k=\tilde J^i(u^{i,\star})=V_1,$$
where the first inequality is due to the definition of $V_2$ and the second inequality is due to the definition of $f(A^\star)$. To contradict assume that $V_2<V_1$. Then, there is some $A$ and an $\varepsilon>0$ such that:
\begin{align}
\label{TechIneq}
-b^ie^{-A}+f(A)<V_1-2\varepsilon.	
\end{align}
Thus, there is a $\tilde u^i$ such that $A=\sum_{k=0}^{N-1}r\tilde u_k^{i}I_k^f$ and $\sum_{k=0}^{N-1}\tilde u^{i}_k\bar{u}_k<f(A)+\varepsilon$. Combining with \eqref{TechIneq} we get
$ \tilde J^i(\tilde u)<V_1-\varepsilon	$, which contradicts the definition of $V_1$.

For $u^{i,\star}$ minimizing \eqref{aux_opt_pr}, the problem \eqref{Probl2} is minimized for $A^\star=\sum_{k=0}^{N-1}ru_k^{i,\star}I_k^f$ and $u^{i,\star}$ attains the minimum in \eqref{f_def}. To see this, observe that otherwise we would have $V_2<V_1$. On the other hand, assume that $A$ minimizes \eqref{Probl2} and $ u^i$ attains the minimum in \eqref{f_def}. Then, $V_2=-be^{-A}+f(A)=J^i(u^i)$. Furthermore, since $V_2=V_1$, it holds $J^i(u^i)=V_1$, and thus $u^i$ minimizes $J^i$.
\newline
(iii) The set $\left\{ u^i\in U:  \sum_{k=0}^{N-1}u^i_kI^f_k=A/r  \right\}$ is nonempty if and only if $A\in[A_m,A_M]$. Thus, the $f(A)$ is finite if and only if $A\in[A_m,A_M]$.

For $A\in[A_m,A_M]$, there exists an optimal solution $u^i$ that attains the minimum in \eqref{f_def}. Since, \eqref{f_def} is a feasible linear programming problem, there is a Lagrange multiplier $\lambda$ (e.g. Proposition 5.2.1 of \cite{Bert}), and $u^i$ minimizes the Lagrangian:
 \begin{align}
 L(u^i,\lambda)=-\sum_{k=0}^{N-1}\bar u_k u_k^i +\lambda\sum_{k=0}^{N-1}I_k^fu_k^i-\lambda A/r.
\label{Lagrangian}
 \end{align}
Thus, $u_k^i=u_m$, if $\bar u_k/I^f_k<\lambda$ and $u_k^i=u_M$, if $\bar u_k/I^f_k>\lambda$. To compute $f(A)$, we reorder $k$, using a new index $k'$, such that $\bar u_{k'}/I^f_{k'}$ is non-increasing.
Let:
$$k'_A=\	max\{\bar k_A': \sum_{k'=0}^{\bar k_A'-1} u_MI^f_{k'}+  \sum_{k'=\bar k_A'}^{N-1} u_mI^f_{k'}\leq A/r\}.$$
Then:
$$\Sigma_{k_A'}+ u_{\bar k_A'}^iI^f_{\bar k_A'} = A/r.$$
where $\Sigma_{k_A'}=\sum_{k'=0}^{\bar k_A'-1} u_MI^f_{k'}+  \sum_{k'=\bar k_A'+1}^{N-1} u_mI^f_{k'}$.
Thus:
$$f(A)=-\sum_{k'=0}^{k_A'-1} \bar u_{k'} u_M-\sum_{k'=k_A'+1}^{N-1}  \bar u_{k'} u_m-\frac{\bar u_{k_A'}}{I^f_{k_A'}}(A/r -\Sigma_{k_A'}),$$
for $A/r\in[\Sigma_{k_A'}+ u_m I^f_{ k_A' } , \Sigma_{k_A'}+ u_M I^f_{ k_A' } ]$. It holds:$$\Sigma_{k_A'}+ u_M I^f_{ k_A' }=\Sigma_{k_A'+1}+ u_m I^f_{k_A' +1}.$$  Therefore, $f$ is continuous and  piecewise affine. Furthermore, since $\bar u_{k'}/I^f_{k'}$ is non-increasing with respect to $k'$, the slope of $f$ is non-decreasing, i.e., it is convex. Thus, $f$ is differentiable in all $(A_m,A_M)$ except of the points $A=\Sigma_{k'}+u_MI_{k'}^f$ with $\bar u_{k'}/I^f_{k'}>\bar u_{k'+1}/I^f_{k'+1}$.  The linear parts of $f$ are at most $N$.
\newline
(iv) Since $-b^ie^{-A}$ is strictly concave in $A$, there are at most $N+1$ possible minima of $ -b^ie^{-A}+f(A)$, which correspond to the points of non-differentiability of $f$ in $(A_m,A_M)$ and the points $A_m$ and $A_M$. Observe that for $A=A_m$ or $A=A_M$, there is a unique $u^i$ minimizing \eqref{f_def}.

We then show that for all the non-differentiability points $A$ of $f$  there is a unique $u^i$ minimizing \eqref{f_def}. If $A$ is a non-differentiability point, there is a $k'_0$ such that $A/r=\sum_{k'=0}^{k'_0}I_{k'}^fu_M+\sum_{k'=k'_0+1}^{N-1}I_{k'}^fu_m$ and $\bar u_{k'_0}/I^f_{k'_0}>\bar u_{k'_0+1}/I^f_{k'_0+1} $. We then show that the the unique minimizer in \eqref{f_def} is given by $u^i_{k'}=u_M$ for $k'\leq k'_0$ and  $u^i_{k'}=u_m$ for $k'> k'_0$. Indeed, $u^i$ is feasible and if $u'\neq u^i $ is another feasible point it holds:
$$ \sum_{k'=0}^{k'_0} (u_M-u'_{k'})I^f_{k'} +\sum_{k'=k'_0+1}^{N-1}(u_m-u'_{k'})I^f_{k'}=0. $$
Multiplying by $\bar u_{k'_0}/I^f_{k'_0}$ we get:
$$ \sum_{k'=0}^{k'_0} (u_M-u'_{k'})\frac{I^f_{k'}\bar u_{k'_0}}{I^f_{k'_0}} +\sum_{k'=k'_0+1}^{N-1}(u_m-u'_{k'})\frac{I^f_{k'}\bar u_{k'_0}}{I^f_{k'_0}}=0. $$
Then, using that $u_M-u'_{k'}\geq 0$, $u_m-u'_{k'}\leq 0$, and that for $k'\leq k'_0$ it holds $\bar u_{k'}/I^f_{k'}\geq \bar u_{k'_0}/I^f_{k'_0}$ and for $k'> k'_0$ it holds $\bar u_{k'}/I^f_{k'}<\bar  u_{k'_0}/I^f_{k'_0}$ we have:
$$ -\sum_{k'=0}^{N-1}u'_{k'}\bar u_{k'} -\left[-\sum_{k'=0}^{N-1}u^i_{k'}\bar u_{k'}\right] = \sum_{k'=0}^{k'_0} (u_M-u'_{k'})\frac{I^f_{k'}\bar u_{k'}}{I^f_{k'}} +\sum_{k'=k'_0+1}^{N-1}(u_m-u'_{k'})\frac{I^f_{k'}\bar u_{k'}}{I^f_{k'}}\geq 0, $$
and the inequality is strict if for some $k'>k'_0$, $u'_{k'}\neq u_m$. Therefore, $u^i$ is optimal and if $u'$ is also optimal then it should satisfy $u'_{k'} = u_m$ for all $k'>k'_0$. Combining this with the fact that $\sum_{k'=0}^{N-1} u'_{k'} I^f_{k'}=A/r$ and $I^f_{k'}>0$, we get $u'=u^i$.
\newline
(v) We have shown that if $u^i$ is optimal then there is a $k'_0$ such that $u^i_{k'}=u_M$ for $k'\leq k'_0$ and $u^i_{k'}=u_m$ for $k'>k'_0$. Then,  using the original index $k$, the optimal control can be expressed as:
$$u^i_k = \begin{cases} u_M ~~~\text{if ~}  \bar u_k/I_k^f\geq \lambda'\\
 u_m ~~~\text{if ~}  \bar u_k/I_k^f< \lambda' \end{cases},$$
where $\lambda' = \bar u_{k_0'}/I^f_{k_0'}$.

\subsection{Proof of Proposition \ref{MonotProp}}
\label{ProofMonotProp}
(i)
 Since $A_1$ is optimal for $b^{i_1}$ and $A_2$ is optimal for $b^{i_2}$ it holds:
\begin{align*}
-b^{i_1}e^{-A_1}+f(A_1)\leq -b^{i_1}e^{-A_2}+f(A_2),\\
-b^{i_2}e^{-A_2}+f(A_2)\leq -b^{i_2}e^{-A_1}+f(A_1).
\end{align*}
Adding these equations and reordering, we get:
\begin{align*}
(b^{i_2}-b^{i_1})e^{-A_2}\geq (b^{i_2}-b^{i_1})e^{-A_1}.
\end{align*}
And since $b^{i_2}>b^{i_1}$, we get $A_1\geq A_2$
 \newline
 (ii) Using (v) of Proposition \ref{EquProp}, and $A_1\geq A_2$ we get:
 \begin{align*}
 A_1/r = \sum_{k=0}^{N-1} u_k^{i_1}I^f_k = \sum_{k=0}^{N-1} \left(u_m+(u_M-u_m)h_{\lambda'_1}(\bar u_k/I^f_k)\right)I^f_k \geq \\~~~~~~~\geq \sum_{k=0}^{N-1} \left(u_m+(u_M-u_m)h_{\lambda'_2}(\bar u_k/I^f_k)\right)I^f_k = \sum_{k=0}^{N-1} u_k^{i_2}I^f_k= A_2/r
 \end{align*}
 where $h_\lambda(\cdot) $  is the Heaviside function, i.e., $h_{\lambda'}(x)=1 $ if $x\geq \lambda'$ and $h_{\lambda'}(x)=0$ otherwise. Therefore, $\lambda'_1\leq \lambda'_2$.
 \newline
(iii)
Assume that for $k_1\neq k_2$, $u^{i_1}_{k_1}=u^{i_2}_{k_2}=u_m$ and $u^{i_2}_{k_1}=u^{i_1}_{k_2}=u_M$. Then, using (v) of Proposition \ref{EquProp} we have:
 \begin{align*}
 \lambda'_2< \frac{\bar u_{k_1}}{I^f_{k_1}}\leq \lambda'_1,~~~
 \lambda'_1\leq \frac{\bar u_{k_2}}{I^f_{k_2}}< \lambda'_2,
 \end{align*}
which is a contradiction.

\subsection{Proof of Proposition \ref{NCP_Prop}}\label{ProofNCP_Prop}
(i) Assume that a $\pi\in\Pi$ satisfies \eqref{NCP_large} and fix a $j\in\{1,\dots,M\}$. For any $l$ such that $\pi_{(j-1)2^N+l}>0$, it holds $\Phi_{(j-1)2^N+l}(\pi)=0$, that is $F_{(j-1)2^N+l}(\pi)=\delta^j(\pi)=\min_{ l'} F_{(j-1)2^N+l'}(\pi)$. Thus,  $\pi\in \Pi$ is a Nash equilibrium.

Conversely, assume that $\pi\in \Pi$ is a Nash equilibrium and fix a $j\in\{1,\dots,M\}$. There is an $l$ such that $\pi_{(j-1)2^N+l}>0$. Since $\pi$ is a Nash equilibrium,
it holds $F_{(j-1)2^N+l}(\pi)=\delta^j(\pi)$ and for all other $l'$ it holds  $F_{(j-1)2^N+l'}(\pi)\geq F_{(j-1)2^N+l}(\pi)=\delta^j(\pi)$, which implies \eqref{NCP_large}.
\newline
(ii) Assume that $\pi$ is a Nash equilibrium and $\pi'\in\Pi$. Then,  $\sum_{l=1}^{2^N}\pi_{(j-1)2^N+l}=\sum_{l=1}^{2^N}\pi'_{(j-1)2^N+l}=m_j$.  Since $\pi$ is a Nash equilibrium  it holds: $$\sum_{l=1}^{2^N} (\pi'_{(j-1)2^N+l}-\pi_{(j-1)2^N+l})^TF_{(j'-1)2^N+l}(\pi)\geq0.$$  Thus, \eqref{VI_large} holds.

Conversely, assume that \eqref{VI_large} holds, for some $\pi \in \Pi$. If $\pi$ is not a Nash equilibrium, then there is a $j,l$ such that $\pi_{(j-1)2^N+l}>0$ and $F_{(j-1)2^N+l}>\delta^j(\pi)$. Then, if $l'$ is such that $F_{(j-1)2^N+l'}=\delta^j(\pi)$. Taking $\pi'=\pi+\pi_{(j-1)2^N+l}e_{{(j-1)2^N+l'}}-\pi_{(j-1)2^N+l}e_{{(j-1)2^N+l}}$, we get $(\pi'-\pi)^TF(\pi)<0$, which is a contradiction.
\newline
(iii) With a slight abuse of notation we write $I^f_k(\pi), \bar u_k(\pi)$ to describe the quantities $I^f_k, \bar u_k$ when the distribution of actions is $\pi$  and $\tilde J_j(v^l,\pi)$ to describe the auxiliary cost of a player of type $j$ who plays action $v^j$ when the distribution of the actions is $\pi$.

  \begin{lemma}\label{l1}
The quantities  $I^f_k(\pi), \bar u_k(\pi), \tilde J_j({v^l},\pi)$, are continuous  on $\pi$.
\end{lemma}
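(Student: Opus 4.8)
The plan is to establish the continuity of $I^f_k(\pi)$, $\bar u_k(\pi)$, and $\tilde J_j(v^l,\pi)$ essentially by unwinding their definitions and using continuous dependence of the ODE solution on parameters. First I would note that the state trajectories $S^i(t), I^i(t)$ enter the relevant quantities only through the type $j$ of player $i$ (by Assumption 1) and through the aggregate signal $I^f$. So I would introduce, for each type $j$, the finitely many trajectories $S_j(t), I_j(t)$ solving \eqref{individual_dynamics} with the type-$j$ initial conditions and a fixed candidate action vector $v^l$, driven by $I^f$; these are the only objects that matter.

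The key observation is that $\bar u_k(\pi)$ is \emph{affine} in $\pi$: from \eqref{u_tilde} and \eqref{u_bar_def}, $\bar u_k = \int_{t_k}^{t_{k+1}}(\kappa + \tilde u(t))\,dt = \kappa(t_{k+1}-t_k) + \sum_{j,l} \pi_{(j-1)2^N+l}\, v^l_k (t_{k+1}-t_k)$, so continuity (indeed linearity) is immediate. For $I^f_k(\pi) = \int_{t_k}^{t_{k+1}} I^f(t)\,dt$ with $I^f(t) = \sum_{j,l}\pi_{(j-1)2^N+l}\, I_{j,l}(t)\, v^l_k$ on $[t_k,t_{k+1})$ (here $I_{j,l}$ is the infected probability of a type-$j$ player using action $v^l$), the dependence on $\pi$ is through both the explicit coefficients and the trajectories $I_{j,l}(\cdot)$, which themselves depend on $\pi$ because they are driven by $I^f$. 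The plan here is an induction on the time steps $k = 0, 1, \dots, N-1$: on $[t_0,t_1)$ the driving signal $I^f(t)$ is an affine function of $\pi$ with coefficients involving $I_{j,l}(t)$ that solve an ODE whose right-hand side depends continuously (in fact Lipschitz-continuously, via the saturated dynamics of Appendix \ref{ProofExistenceDE}) on $\pi$; hence by standard continuous dependence of ODE solutions on parameters (e.g.\ Grönwall), each $I_{j,l}(t)$ is continuous in $\pi$ uniformly on $[t_0,t_1]$, so $I^f_0(\pi)$ is continuous and the endpoint states $S_{j,l}(t_1), I_{j,l}(t_1)$ are continuous in $\pi$. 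Feeding these as initial conditions into the next interval and repeating gives continuity of all $I^f_k(\pi)$ and of all state trajectories. Finally, $\tilde J_j(v^l,\pi)$ is given by \eqref{AuxilCost} as a fixed continuous function (an exponential and a linear form) of the finitely many numbers $I^f_k(\pi)$ and $\bar u_k(\pi)$, hence continuous by composition.

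The main obstacle is the self-referential nature of $I^f$: the trajectories that define $I^f$ are themselves solutions of ODEs driven by $I^f$. The clean way to handle this is exactly the step-by-step induction described above — on each interval $[t_k,t_{k+1})$ the actions are constant (Assumption 2), so the coupled system is a finite-dimensional ODE (one pair $(S_{j,l},I_{j,l})$ per type $j$ and per action $v^l$ appearing with positive mass, coupled only through the scalar $I^f$), with right-hand side jointly continuous in $(\text{state},\pi)$ and Lipschitz in the state; invoking the continuous-dependence theorem on that finite system, and then propagating endpoint conditions forward, avoids any infinite-dimensional fixed-point argument. One should just be slightly careful that the map $\pi \mapsto (\text{solution on }[t_k,t_{k+1}])$ is continuous into $C([t_k,t_{k+1}])$, so that the integrals $I^f_k(\pi)$ and the endpoint values depend continuously on $\pi$; this is routine given the Lipschitz bounds already recorded in Appendix \ref{ProofExistenceDE}.
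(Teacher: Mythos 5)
Your proposal is correct and follows essentially the same route as the paper: both reduce the problem to a finite-dimensional ODE system with one trajectory pair per (type, action) combination coupled only through the scalar $I^f$, invoke continuous dependence of ODE solutions on the parameter $\pi$ (the paper cites Theorem 3.4 of \cite{Khalil} once over the whole horizon, adding an auxiliary state $\dot z = I^f$ to capture the integrals, while you propagate interval by interval), and then obtain continuity of $\tilde J_j(v^l,\pi)$ by composition via \eqref{AuxilCost}. The interval-by-interval induction is only a more explicit handling of the piecewise-constant actions, not a different argument.
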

\begin{proof}
The state of the system evolves according to the set of $M2^{N+1}+1$ differential equations:
\begin{align*}
\dot {S}^{j,v^l} &= -r v^l_k S^{j,v^l}I^f,\\
\dot {I}^{j,v^l} &=r v^l_k S^{j,v^l}I^f-\alpha_j I^{j,v^l},\\
\dot z &= I^f,
\end{align*}
where $j=1,\dots,M$, $l=1,\dots,2^N$, $k: t\in [t_k,t_{k+1})$, and: $$I^f  = \sum_{j=1}^M\sum_{l=1}^{2^N} \pi_{(j-1)2^N+l }{I}^{j,v^l}v^l_k. $$ The initial conditions are ${S}^{j,v^l}(0)={S}^{j}(0)$, ${I}^{j,v^l}(0)={I}^{j}(0) $, (Assumption 1) and $z(0)=0$.

The right-hand side of the differential equations depend continuously on $\pi$ through the term $I^f$.  Furthermore, $S^{j,v^l}(t),I^{j,v^l}(t)$ remain in $[0,1]$ for all $j,v^l$. Thus, the state space of the system remains in $[0,1]^{M\cdot 2^N}\times \mathbb R$
 the right-hand side of the differential equation is Lipschitz. Thus, Theorem 3.4 of \cite{Khalil} applies and ${S}^{j,v^l}(t)$,  ${I}^{j,v^j}(t)$ and $z(t)$  depend  continuously on $\pi$.
Thus, $I^f_k = z({t_k+1}) -z({t_k})$ depends continuously on $\pi$. Furthermore, $\bar u_k$ is continuous (linear) on $\pi$. Finally, the auxiliary cost $J_j(v^l,\pi)$, due to its form \eqref{AuxilCost}, depends continuously   $\pi$.
\end{proof}

To complete the proof observe that $F(\pi)$ is continuous  and $ \Pi$ is compact and convex. Thus, the existence  is a consequence of Corollary 2.2.5 of \cite{FaciPang}.

\begin{remark}
An alternative would be to use Theorem 1 of \cite{Schmeidler} or Theorem 1 of \cite{Mas_Colell}, combined with Lemma \ref{l1} to prove the existence of a mixed Nash equilibrium and then use Assumption 1, to construct a pure strategy equilibrium. However, the reduction to an NCP is useful computationally.
 \end{remark}

\subsection{Proof of Lemma \ref{ChainLem}}
\label{ProofChainLem}

 Every maximal chain begins with the least element $[u_m,\dots,u_m]$ and ends at the greatest element $[u_M,\dots,u_M]$.
  Every two consecutive elements of a maximal chain $v^l$, $v^{l+1}$ differ at exactly one point, otherwise there exists a vector $v'$: $v^l\preceq v'\preceq v^{l+1}$ and thus the chain is not maximal.

  Thus, beginning from $[u_m,\dots,u_m]$ and changing at each step one point from $u_m$ to $u_M$ we get a sequence of $N+1$ ordered vectors. So, every maximal chain has length $N+1$.

  Then, we prove that the number of such chains is $N!$ using induction.

  For $N=2$ it is easy to verify that we have two chains of 3 elements.

  Let for $N=n$ we have $n!$ maximal chains of $n+1$ elements.
  Then for $N=n+1$ we consider one of the previous chains $v^1\preceq v^2\preceq\dots\preceq v^N$ and at each of its elements we add an extra bit: $\tilde{v}^{i}=[v^i,\beta^i]$. We observe that if $\beta^i=u_M$ then for all $j>i$ it should hold $\beta^j=u_M$, in order for the new vectors to remain ordered under $\preceq$.

  Denote by $i_c$ the point that $\beta^i$ change from $u_m$ to $u_M$. For each choice of $i_c: \quad \beta^j=u_m, \quad j<i_c \quad \textrm{and} \quad \beta^j=u_M, \quad j>i_c$ we take two ordered vectors $\tilde{v}^{i_c}_1=[v^{i_c},u_m]$ and $\tilde{v}^{i_c}_2=[v^{i_c},u_M]$ in the new chain, so we have two $\beta^{i_c}$. Thus, we have $N+1$ possible choices for the $\beta=[\beta^i]\in \{u_m,u_M\}^{N+1}$. This way we observe that from each chain in $(\{u_m,u_M\}^N,\preceq)$ we can construct $N+1$ chains in $(\{u_m,u_M\}^{N+1},\preceq)$.

\begin{remark}
The fact that $\mathcal V$ has at most $N+1$ elements is also a consequence of Corollary \ref{StrategyCorollary}.
\end{remark}

\subsection{Proof of Proposition \ref{Monot_rho}}
\label{Monot_rho_proof}

(i) To contradict assume that $\mathbbm I_k\not\subset\mathbbm I_{k'}$ and  $\mathbbm I_{k'}\not\subset\mathbbm I_{k}$. Then, there is a pair of players $i_1,i_2$ such that $u^{i_1}_k=u^{i_2}_{k'}=u_M$ and $u^{i_2}_k=u^{i_1}_{k'}=u_m$, which contradicts Proposition \ref{MonotProp}.(iii).

(ii) Without loss of generality assume that $\mathbbm I_k\subset \mathbbm I_{k'}$. Then: $$\mu(\mathbbm I_k)=\rho_k = \rho_{k'} = \mu(\mathbbm I_k')=\mu(\mathbbm I_k)+\mu(\mathbbm I_{k'}\smallsetminus\mathbbm I_{k'}).$$
 Thus, $\mu(\mathbbm I_{k'}\smallsetminus\mathbbm I_{k})=0$. Combining with  $\mathbbm I_k\subset \mathbbm I_{k'}$ and the definition of $\mathbbm I_k, \mathbbm I_{k'}$ we get $\mu(\{i:u^i_k=u^i_{k'}\})=1$. 

(iii)  The equality $\mu(\mathbbm I_{k })=\rho_{k}$, is immediate from the definition of $\rho_k$. Consider a pair $\mathbbm I_{k_n}, \mathbbm I_{k_{n+1}}$.
 There are two cases, $\rho_{k_n}<\rho_{k_{n+1}}$ and $\rho_{k_n}=\rho_{k_{n+1}}$. In the first case, we cannot have $\mathbbm I_{k_{n+1}}\subset \mathbbm I_{k_{n}}$. Thus, from (i) we have  $\mathbbm I_{k_{n}}\subset \mathbbm I_{k_{n+1}}$.  If $\rho_{k_n}=\rho_{k_{n+1}}$, then  $\mathbbm I_{k_n}\subsetsim\mathbbm I_{k_{n+1}}$ from part (ii).

The inclusion  $\mathbbm K_{k_n}\supset\mathbbm K_{k_{n+1}}$ is immediate from the definition.

(iv) Let $i\in \mathbbm I_{k_{n+1}}\smallsetminus \mathbbm I_{k_{n}}$. Then, since $i\not\in \mathbbm I_{k_{n}}$ $u^i_{k_n'} =u_m$ for $n'\leq n$. On the other hand, $\mu$-almost all $i\in I_{k_{n+1}}$ satisfy $i\in I_{k_{n'}}$, for $n'>n$. Thus, for $\mu$-almost all $i\in \mathbbm I_{k_{n+1}}\smallsetminus \mathbbm I_{k_{n}}$, the action $u^i$ is given by \eqref{Actions_}. The proof is similar for $i\in \mathbbm I_{k_1}$, and  $i\in [0,1)\smallsetminus\mathbbm I_{k_N} $.

\subsection{Proof of Proposition \ref{Reduced}}
\label{Reduced_proof}
If $\rho$ corresponds to a Nash equilibrium, then combining \eqref{NCP_large} and \eqref{compute_pi} we conclude that $H(\rho)=0$. Conversely, since all the terms of \eqref{Redu_Eq} are nonnegative, $H(\rho)=0$ implies that if $ \mu([\bar i_{j-1},\bar i_j)\cap[ \rho_{k_{n-1}},\rho_{k_{n}}))>0$, then $ F_{(j-1)2^N+n}(\tilde\pi(\rho))=\delta^j(\tilde\pi(\rho))$. Combining this with \eqref{compute_pi}, we conclude that  if for some $j,l$, $\pi_{(j-1)2^N+l}>0$ then  $  F_{(j-1)2^N+l}(\pi)=\delta^j(\pi)$, where $\pi=\tilde \pi(\rho)$. That is, $\pi$ is a Nash equilibrium.

From \eqref{compute_pi}, we observe that $\pi(\rho)$ is continuous with respect to $\rho$, since $\mu(\cdot)$ is the Lebesque measure. Moreover,   \eqref{Redu_Eq} can be expressed as:
\begin{equation*}
  H(\rho)=\pi(\rho)^T\Phi(\pi(\rho)).
\end{equation*}
The fact that $H(\rho)$ is nonnegative is a result of \eqref{NCP_large}. Furthermore, from Lemma \ref{l1}, $F_{(j-1)2^N+n}(\pi)=\tilde{J}_j(v^n,\pi)$ is continuous with respect to $\pi$. Additionally, $\delta^j(\pi)$, which is the minimum of $F_{(j-1)2^N+l}(\pi)=\tilde{J}_j(v^l,\pi)$ for all $v^l$, is continuous with respect to $\pi$ as the minimum of continuous functions. So, $\Phi(\pi)=F(\pi)-\delta(\pi)$ is continuous with respect to $\pi$ and $H(\rho)$ is continuous with respect to $\rho$ as composition of continuous functions.\\

 \bibliography{refs3}

\bibliographystyle{IEEEtran}

\end{document}